\newcommand{\ATTCK}{{\sffamily ScheduLeak}\xspace}
\newcommand{\ATTCKNF}{{ScheduLeak}\xspace}
\newcommand{\OBSERVER}{{observer task}\xspace}
\newcommand{\NK}{{\vert J^{\tau_i}_k \vert}\xspace}
\newcommand{\NKSET}{{J^{\tau_i}_k}\xspace}
\newcommand{\CK}{{l_k}\xspace}
\newcommand{\ATME}{{\sigma^{\tau_i}_{k,h}}\xspace}
\newcommand{\ATMEW}{{\overline{\sigma^{\tau_i}_{k,h}}}\xspace}
\newcommand{\STME}{{\delta^{\tau_i}_{k,h}}\xspace}
\newcommand{\eg}{{\it e.g.,}\xspace}
\newcommand{\ie}{{\it i.e.,}\xspace}
\newcommand{\ci}{{\it (i) }}
\newcommand{\cii}{{\it (ii) }}
\newcommand{\ciii}{{\it (iii) }}
\newcommand{\civ}{{\it (iv) }}
\newcommand{\ca}{{\it (a) }}
\newcommand{\cb}{{\it (b) }}
\newcommand{\cc}{{\it (c) }}
\theoremstyle{definition}
\newtheorem{definition}{Definition}
\newtheorem{example}{Example}
\newtheorem{lemma}{Lemma}
\newtheorem{theorem}{Theorem}
\DeclarePairedDelimiter{\ceil}{\lceil}{\rceil}
\begin{document}


\title{\Large A Reconnaissance Attack Mechanism for Fixed-Priority Real-Time Systems
\vspace{-2ex}}


\author{
\IEEEauthorblockN{\small Chien-Ying Chen\IEEEauthorrefmark{1}, AmirEmad Ghassami\IEEEauthorrefmark{2}, Sibin Mohan\IEEEauthorrefmark{1}, Negar Kiyavash\IEEEauthorrefmark{2},  Rakesh B. Bobba\IEEEauthorrefmark{3}, Rodolfo Pellizzoni\IEEEauthorrefmark{4} and Man-Ki Yoon\IEEEauthorrefmark{1}} 
\IEEEauthorblockA{\small \IEEEauthorrefmark{1}Dept. of Computer Science, \IEEEauthorrefmark{2}Dept. of Electrical and Computer Engineering, University of Illinois at Urbana-Champaign, Urbana, IL, USA}
\IEEEauthorblockA{\small \IEEEauthorrefmark{3} School of Electrical Engineering and Computer Science, Oregon State University, Corvallis, OR, USA}
\IEEEauthorblockA{\small \IEEEauthorrefmark{4}Dept. of Electrical and Computer Engineering, University of Waterloo, Ontario, Canada}
\small Email: \{\IEEEauthorrefmark{1}cchen140,
\IEEEauthorrefmark{2}ghassam2,
\IEEEauthorrefmark{1}sibin,
\IEEEauthorrefmark{2}kiyavash,
\IEEEauthorrefmark{1}mkyoon\}@illinois.edu,
\IEEEauthorrefmark{3}rakesh.bobba@oregonstate.edu,
\IEEEauthorrefmark{4}rodolfo.pellizzoni@uwaterloo.ca
\vspace{-2ex}}

\maketitle

\begin{abstract}
In real-time embedded systems (RTS), failures due to security breaches can cause serious damage to the system, the environment and/or injury to humans. Therefore, it is very important to understand the potential threats and attacks against these systems.
In this paper we present a novel reconnaissance attack
that extracts the exact schedule of real-time systems designed 
using
fixed priority scheduling algorithms. 
The attack is demonstrated on both a real hardware platform and a simulator, with a high success rate. 
Our evaluation results show that the algorithm is robust even in the presence of execution time variation.

\end{abstract}

\section{Introduction}
\label{sec::intro}
Today, real-time embedded systems are used in a variety of domains such as automotive, medical devices, avionics and spacecraft and industrial control (\eg power plants, chemical plants) to name but a few. Many of these systems are {\em safety-critical} by nature. That is, any problems that deter from the normal operation of such systems could result in damage to the system and the environment, and even threaten human safety. 

Conventionally, real-time systems (RTS) used custom platforms, software and protocols and were normally not linked to the external world. As a result, security was typically not a priority in the design of RTS. However, due to the drive towards remote monitoring and control facilitated by the growth of the Internet and increased connectivity, as well as the rise in the use of common-off-the-shelf (COTS) components, these traditional assumptions are increasingly being challenged. This is evident from recent work (\eg \cite{checkoway2011comprehensive,gollakota2011they}) that showed that it is easy to attack such systems. The work also demonstrated that successful attacks against such systems could lead to problems more serious than just loss of data or availability because of their critical nature.
This has spurred work on security of RTS, including both defensive techniques (\eg \cite{embeddedsecurity:mohans3a2013,yoonsecurecore2013,Zadeh:2014,embeddedsecurity:mohan2014,embeddedsecurity:mohan2015,YoonMHM:2015,xie2007schedulesecurity,lin2009rtssecurity}) and attacks (\eg \cite{checkoway2011comprehensive,gollakota2011they}). These attacks for the most part leveraged the lack of authentication in RTS and their communications. 

In this paper, we present a \emph{novel reconnaissance attack on RTS} that exploits their predictable nature to gather system behavior information that can be used to launch other attacks. Reconnaissance is often the first and an important step for many attacks.
For example, the attackers in the case of Stuxnet~\cite{chen2011lessons} were in the system undetected for months to collect information before launching their attacks. In RTS, any deviation from the expected behavior is suspicious and easier to detect in contrast to general purpose systems; attackers must operate within narrow operational parameters (\eg stringent timing and resource constraints) if they are to avoid detection. 
This is particularly true in the case of side-channel and covert-channel attacks. Often such attacks require {\em precise knowledge} of when the victim task is about to execute for the attacker to achieve high probability of success \cite{Kocher:1996,Aciicmez:2007:PSB,Liu:2015:LLC}. 
In such a case, the attacker may have the code of the system, but does not know when the tasks are activated (\ie task offsets are unknown) because the attacker may enter the system sometime after startup.
Hence, the ability to {\em accurately reconstruct} the behavior of the system -- in this context extracting the points in time when one or more tasks of interest (victims) execute, without being detected, would significantly improve the success of such attacks on RTS. 
In this work we present algorithms to {\em extract\footnote{We use {\em extract} and {\em reconstruct} interchangeably in this paper.} (and hence, leak) the precise schedule of the system} (\cref{sec::approach}) without perturbing the system behavior (to reduce the risk of detection)\footnote{A shorter preliminary version of this paper~\cite{certs2016shceduleak} was presented at a workshop, without archived proceedings.}.

Due to their safety-critical nature, RTS are designed with great care and significant engineering effort to operate in a {\em predictable} manner. For instance, 
\ca designers take great care to ensure that the constituent tasks in such systems execute in an expected manner~\cite{LiuLayland1973}, 
\cb their interrupts are carefully managed\cite{Zhang:2006:PIS} 
and \cc the execution time 
is analyzed to great degree at compile time and run-time (\eg 
\cite{hansen2009statistical, 
burns2000predicting,
cazorla2013proartis}).
However, as we demonstrate, this predictability can be a double edged sword --
it
often makes it easier for adversaries to gauge the
behavior of the system with high precision. 
This can increase the success
rates for certain attacks, \eg side-channel\cite{Kocher:1996} and covert-channel
attacks\cite{embeddedsecurity:volp2008}. 
An example that demonstrates a cache-based side-channel attack,
leveraging this predictable nature,
is presented in \cref{subsec:example_case_study}.

Our focus here is on hard real-time systems designed around {\em fixed
priority algorithms}\cite{Buttazzo:1997:HRC,LiuLayland1973} since 
these are the most common class of real-time scheduling algorithms found in practice today\cite{Liu:2000:RTS}. Also, this class of scheduling algorithms is most
vulnerable to attacks since they 
are very predictable. We carry out exhaustive simulations to understand the design space for the proposed algorithms (\cref{subsec:simulation_eval}). 
Furthermore, we demonstrate an {\em actual implementation} of the attack on a hardware board (\cref{subsec:zedboard_eval}) running a real-time operating system.

In summary, the main contributions of this paper are:
\begin{enumerate}
\itemsep 0pt
\item a reconnaissance attack scheme aimed at {\em extracting the schedule} of fixed-priority, preemptive
	hard real-time systems -- this involves development of algorithms
	capable of {\em reconstructing} the schedule of the system (\cref{sec::approach});
%
\item introducing new performance metrics for the precision of the schedule inference (\cref{subsec::metrics}) and evaluating the attack scheme using exhaustive simulations (\cref{subsec:simulation_eval});

\item implementing the attack scheme using a realistic embedded platform (\cref{subsec:zedboard_eval});

\item demonstrating one use case where a reconstructed schedule is used to launch other attacks, \eg a cache-based side-channel attack (\cref{sub:eval_cache_attack}).
\end{enumerate}

\section{System and Attack Model}
\label{sec::model}

In this section we first introduce the real-time system model used in this paper. 
Next, we discuss the attack model and the information needed for successfully launching the proposed attack. 
A demonstrative system and attack scenarios are given in the end.
For ease of reference, the notation defined for the \ATTCKNF algorithms is summarized in Table~\ref{tab:notation}.

\newcommand{\tabspace}{\hspace{0.3cm}}
\begin{table}[h]\footnotesize
\caption{Glossary of the \ATTCKNF notation.}
\label{tab:notation}
\centering
\vspace{-0.5\baselineskip}
\begin{tabular}{|l|l|}
\hline
\multicolumn{1}{|c|}{Symbol} & \multicolumn{1}{c|}{Definition} \\ \hline
$W$ & busy interval set $\{\omega_1, ..., \omega_m\}$\\ \hline
\hspace{0.5mm} $\omega_k$ & a busy interval \\ \hline
\hspace{2.5mm} $s_k, l_k$ & start time and length of $\omega_k$ \\ \hline
\hspace{2.5mm} $J_k$ & a job set enclosed in $\omega_k$ \\ \hline
\hspace{2.5mm} $\NKSET$ & a subset of $J_k$ consisting of jobs of $\tau_i$ \\ \hline
\hspace{4.5mm} $j^{\tau_i}_{k,h}$ & the $h^{th}$ job in $\NKSET$ \\ \hline
\hspace{6.5mm} $\ATME, \STME$ & arrival time and start time of $j^{\tau_i}_{k,h}$ \\ \hline
$R$ & \ATTCKNF $R(\Gamma, W)={J}$ where ${J}=\bigcup^{m}_{k=1} J_k$ \\ \hline
\hspace{0.5mm} $\overline{a_i}$ & an arrival window inferred by $R$ (defined in \cref{approach:arrival_windows}) \\ \hline
\end{tabular} 
\vspace{-0.15in}
\end{table}

\subsection{Real-time System and Schedule Model}
\label{subsec:sysmodel}
In this paper, a fixed-priority, preemptive hard real-time system 
is considered. Such a system contains a task set $\Gamma=\{\tau_1, ..., \tau_n\}$ consisting of $n$ periodic real-time tasks with hard deadlines.
Each task $\tau_i$, $1\le i\le n$, is characterized by $p_i$ (period), $wcet_i$ (worst case execution time), 
$c_i$ (actual execution time)\footnote{We assume that the actual execution time at run time is $c_i + \gamma_{i,h}$, where $\gamma_{i,h}$ represents the variation (which includes jitters) at the $h^{th}$ job. However, for simplicity, we will use $c_i$ without variation $\gamma_{i,h}$ to illustrate the proposed algorithm in \cref{sec::approach}.}
, $d_i$ (deadline), $a_i$ (initial arrival time, known as task offset) and $pri_i$ (priority)
based on \cite{LiuLayland1973}.
%
%
%
We model the schedule of a periodic real-time system within one hyper-period as a set of intervals $HP$ that consists of two subsets: \ci a subset of busy intervals $W$ where tasks are under execution \cite{baker2005analysis} and \cii a subset of idle intervals $ID$ where there is no task being executed. 
Specifically, the busy interval set $W$ containing $m$ busy intervals is defined as $W=\{\omega_1, ..., \omega_{m}\}$. Each busy interval $\omega_k$, $1 \leq k \leq m$, is characterized by a tuple $(s_k, l_k, J_k)$, in which $s_k$ and $l_k$ are the start time and the length of $\omega_k$, and $J_k$ denotes the job set (instances of tasks) comprising $\omega_k$. 
We further define the notation $J^{\tau_i}_k$ to be the subset of $J_k$ for $\tau_i$ in $\omega_k$, thus $J_k=\bigcup^{n}_{i=1}J^{\tau_i}_k$. The $h^{th}$ job in $J^{\tau_i}_k$ is then denoted by $j^{\tau_i}_{k,h}$, that is defined by $(\ATME, \STME)$ where $\ATME$ is the arrival time when the job is being scheduled and $\STME$ is the start time when the job begins to execute. 
$\STME$ is greater than or equal to $\ATME$ when $j^{\tau_i}_{k,h}$ has to wait for higher-priority tasks executing before it to finish. 

\subsection{Attack and Adversary Model}
\label{subsec:attackmodel}
The goal of the adversary in our model is to steal information while remaining undetected rather than to disable or disrupt the real-time system. 
Therefore, reconnaissance to obtain knowledge about the system's operation becomes even more important than in enterprise settings as attack attempts or unsuccessful attacks can more easily be detected due to the deterministic and predictable nature of the system. 

We assume that the adversary has a foothold on the real-time system (the attacker is able to run one or more tasks in the system) and has knowledge of the system task set $\Gamma$. 
For example, in a multi-vendor development model~\cite{embeddedsecurity:mohan2015}, an attacker may obtain task set information offline by compromising one of the vendors with weak defenses. Similarly, execution time distributions, that provide the execution times (and estimated variations), obtained during WCET analysis -- an essential process in developing a real-time system \cite{hansen2009statistical, 
burns2000predicting,
cazorla2013proartis} may also be obtained offline. 
A demonstrative avionics system that exemplifies this model along with an attack scenario are introduced in \cref{subsec:example_case_study}.
These assumptions are not unreasonable as demonstrated by Stuxnet \cite{chen2011lessons}, a multi-stage attack in which the attackers compromised many other systems before reaching their target system, and in which the attackers stayed undetected for months conducting reconnaissance on the target system before they launched their attacks.

However, we also assume that the adversary has no knowledge of the initial arrival time $a_i$ for each task. Since initial arrival times of tasks are subject to the system's operating condition, and the adversary's attack code might have started sometime after system starts, the values of $a_i$ are unknown {\it a priori} to the adversary.
Adversary's goal is to use the known data (\ie task sets) to learn the system's run-time behavior and operations without being detected. Specifically, \emph{our focus will be on reconstructing the system's task schedule}.
This information can be used to increase the success rate of other attacks as exemplified by the side-channel attack discussed in \cref{subsec:example_case_study}.

\noindent
\textbf{Observer Task:} 
In preemptive RTS, low priority tasks are particularly useful in monitoring the system's behavior, \eg for measuring the execution time of the task that preempts it. Thus, the lowest priority task 
has the ability to monitor the schedule of a system by observing its busy intervals $W=\{\omega_1,...,\omega_m\}$. 
We define such a task 
as the \emph{``\OBSERVER''}. 
%
In this paper, we assume that an adversary can gain access to an \OBSERVER~\ci either by utilizing an existing lowest priority task or \cii by inserting a new task with the lowest priority and infinite period (hence being continuously active).
%
Note 
the observed intervals are actually a composite of multiple tasks thus making it difficult for direct use without further analysis. 
Hence, an algorithm to extract $J_k$ from $W$ is required.

\subsection{Demonstrative System and Attack}
\noindent
\textbf{Avionics Demonstrator:} 
\label{subsec:example_case_study}
To motivate 
our research, we use an example of the Electronic Control Unit (ECU) for an avionics system as depicted in Figure~\ref{fig:sys_example}. This
system runs many of the same types of tasks that could be expected in an Unmanned Aerial Vehicle (UAV) surveillance system~\cite{embeddedsecurity:mohan2015, puri2005survey}. The ECU communicates locally with the inertial sensors, GPS 
system and actuators (``UAV'' in the figure), as well as a camera subsystem. The ECU also uses off-board communication to exchange information with a base station. 
We assume that three parties are involved in building the ECU system: Vendor 1, Vendor 2 and the Integrator. Each party is responsible for a different ECU subsystem and real-time tasks (See Figure~\ref{fig:sys_example}). 

\begin{figure}[h]
\vspace{-1\baselineskip}
\centering
\includegraphics[width=0.85\columnwidth]{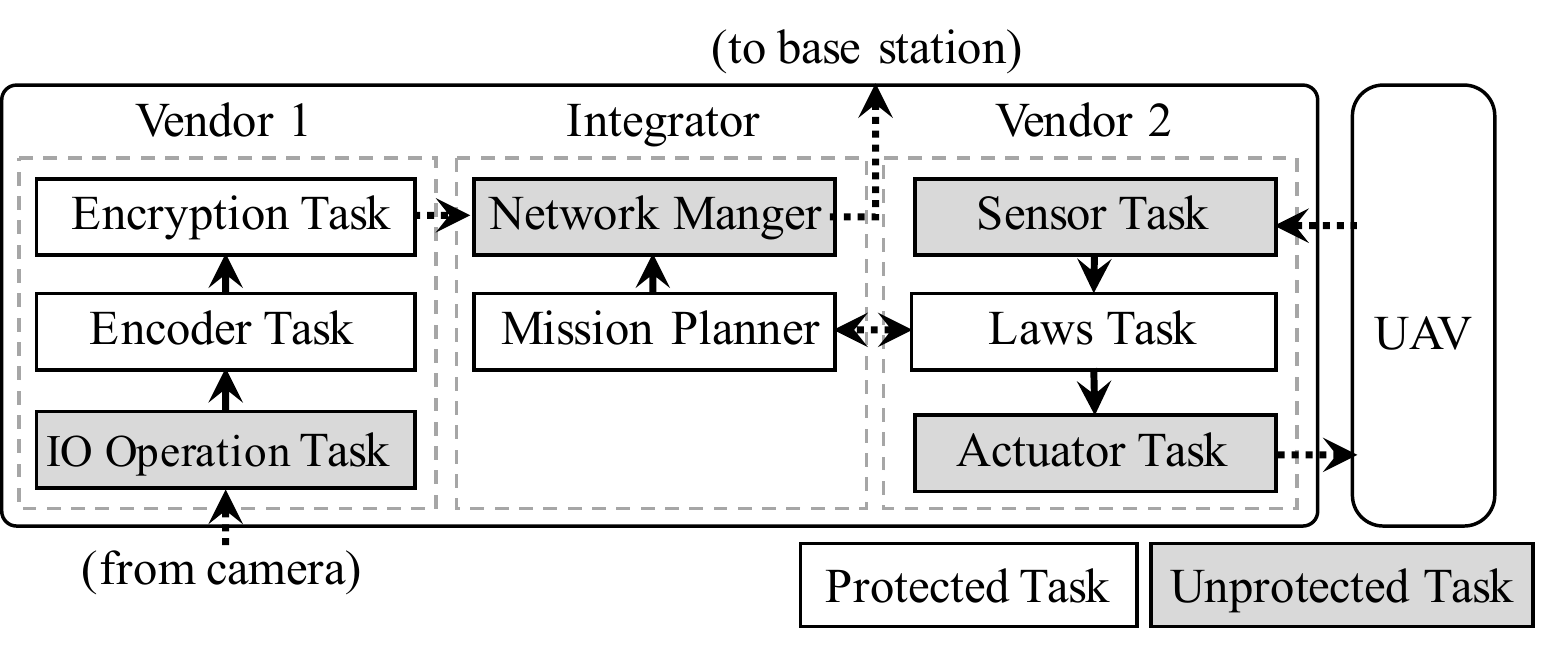}
\vspace{-0.5\baselineskip}
\caption{High-level design of an Unmanned Aerial Vehicle.}
\label{fig:sys_example}
\vspace{-0.5\baselineskip}
\end{figure}

\noindent
\textbf{Targeted Attacks:}
An attacker who accomplishes the reconstruction of task schedule is able to pinpoint the arrival instant and the start instant of any task on a victim system.
Here we introduce an attack that spies upon execution behavior of the target task
with the knowledge of the precise task schedule.

Vendor 2 who manages sensor readings has access to the UAV coordinates. This vendor, as an adversary, would like to identify the exact locations of high-interest targets where the surveillance camera is enabled.
The attacker can achieve that by 
learning the behavior (\eg memory usage, execution modes) of the Encoder Task.
For instance, he can launch a storage-channel-based timing attack 
to observe the resource usage of the Encoder Task
and learn whether the task is processing a large amount of data.
This attack requires precise knowledge of system operation as a careless attack can result in \ci increased noise and less precision in results, or worse \cii a real-time task missing its deadline and thus lead to detection. In this paper, we implement a cache-based side-channel attack on a hardware board to demonstrate the importance and usability of the task schedule inference (See~\cref{sub:eval_cache_attack}).

\section{\ATTCKNF}
\label{sec::approach}

To effectively extract the system's task schedule, we propose 
``\ATTCK'', {\em a novel algorithm that uses the predictable, periodic, nature of hard real-time systems to reconstruct the schedule}. 
It can be represented as a function
$R(\Gamma, W)={J}$,
where $R$, \ATTCKNF algorithm, takes the target system's task set $\Gamma$ and busy interval set $W$ as inputs and produces the inferred job set ${J}=\bigcup^{m}_{k=1} J_k$.
Here, the task set $\Gamma$ is known except for the arrival time attribute $a_i$ 
and the busy interval set $W$ is obtained from the \OBSERVER controlled by the attacker while the details of job set $J_k$ stay unknown. Finally, ${J}$
allows the attacker to pinpoint the possible start time of any particular victim task. 
The sequence of iterative steps in \ATTCKNF are summarized in Figure~\ref{fig:attack_flow_chart}.
A detailed illustration is given next.
\begin{figure}[h]
\vspace{-1\baselineskip}
\centering
\includegraphics[width=0.9\columnwidth]{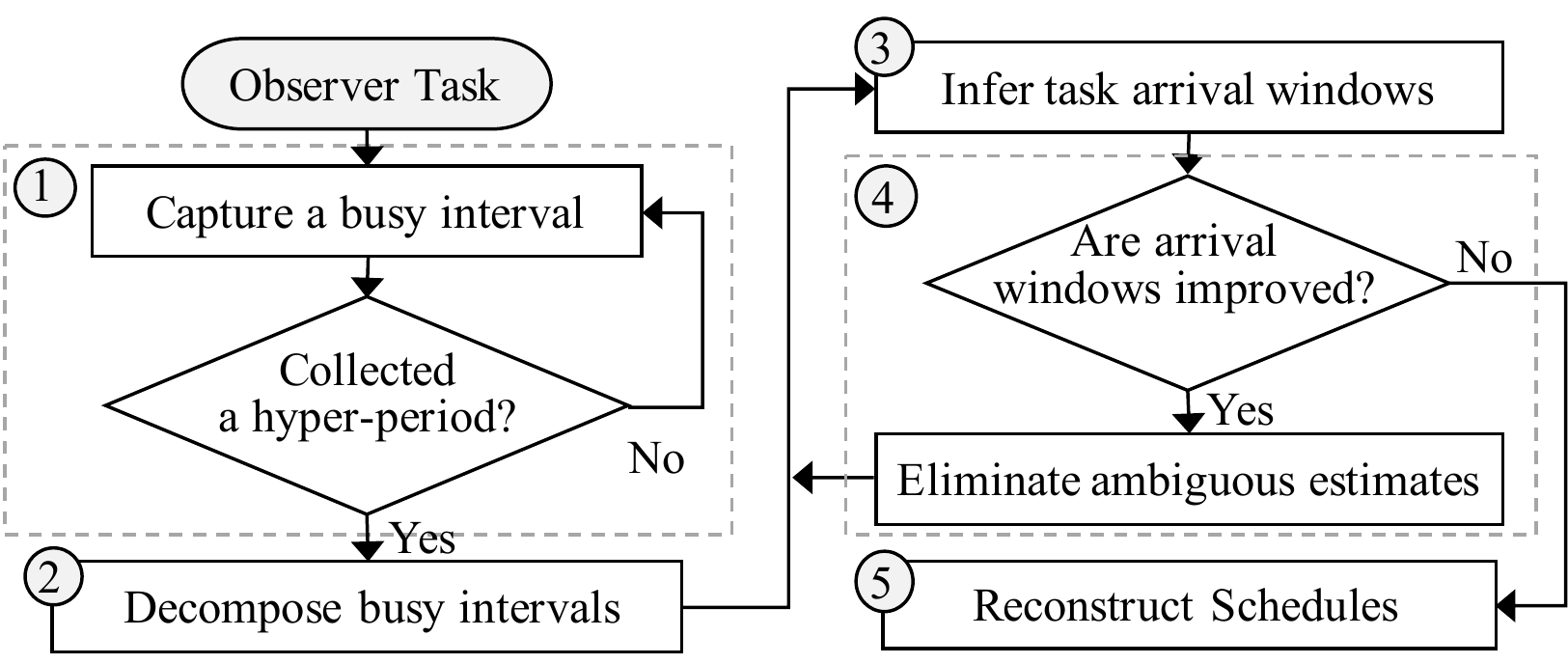}
\caption{Flow of the $\ATTCKNF$ algorithm. }
\label{fig:attack_flow_chart}
\vspace{-0.25in}
\end{figure}



\subsubsection{Capturing Busy Intervals}
\label{approach:capture_busy_intervals}
Recall from \cref{subsec:attackmodel} that the adversary utilizes his \OBSERVER to capture busy intervals.
To do so, {\em the \OBSERVER checks whether it has been preempted}. 
When a preemption is detected, a busy interval $\omega_k$ is found. The duration between preemptions is $l_k$ and the start time of this busy interval is $s_k$. 
The detailed mechanism for capturing a busy interval is presented in \cref{implementation}.
This process of capturing one busy interval repeats until the \OBSERVER collects all busy intervals in one hyper-period (since the schedule of arrivals repeats every hyper-period).
Eventually, the adversary can get the complete busy interval set $W=\{\omega_1, ..., \omega_m\}$. 
Note that \ATTCKNF can also work with busy intervals observed from fewer or more than one hyper-period as discussed in~\cref{subsec::simulation_results}

\subsubsection{Decomposing Busy Intervals}
\label{approach:estimate_of_nk}
The goal of this step is to estimate possible job compositions of ${J_k}$ in each busy interval $\omega_k$. 
The duration of $\omega_k$ can be calculated by
%
\vspace{-0.5\baselineskip}
\small
\begin{equation}\label{eqn:biLength}
l_k=\sum_{i=1}^{n} (\NK \cdot c_i)
\vspace{-0.3\baselineskip}
\end{equation}
\normalsize
\noindent
where $l_k$ and $c_i$ are known to the adversary. Therefore, the problem here can be further refined as finding the number of jobs $\NK$ for each task $\tau_i$ arriving in $\omega_k$.

A task $\tau_i$ may either contribute nothing or contribute one or more jobs to $\omega_k$, \ie $0 \leq \NK$. But, given the duration $l_k$ of the busy interval, the number of jobs for a task $\tau_i$ depends on its period $p_i$ and execution time $c_i$. Therefore, knowing $p_i$ and $c_i$ of every task, we can reduce the number of possibilities for the combination of $\NKSET$. 
We use Theorem~\ref{thm:estimate_nk} described below to either estimate the exact value of $\NK$ or, in the worst case, reduce the possible candidates for $\NK$ to only 2 values.

\begin{theorem}
\label{thm:estimate_nk}
For the task $\tau_i$, given values of $p_i,c_i,\CK$:\\
\noindent
(i) If $l_k$ satisfies 
$(N \cdot p_i-c_i)^+ \le \CK < N \cdot p_i+c_i$,
then $\omega_k$ can only contain $N$ jobs for $\tau_i$.

\noindent
(ii) If $\CK$ satisfies
$N \cdot p_i+c_i \le \CK < (N+1) \cdot p_i-c_i$,
then $\omega_k$ can only contain $N$ or $N+1$ jobs for $\tau_i$. 
\end{theorem}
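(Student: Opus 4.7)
The theorem translates directly into a constraint-satisfaction question about the arrival times of $\tau_i$ inside $\omega_k$. My plan is to first characterise, for each candidate value $N$ of $\NK$, the exact range of busy-interval lengths $l_k$ that is consistent with $\tau_i$ releasing exactly $N$ jobs during $\omega_k$, and then read off the two cases of the theorem as slicings of this family of intervals.

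\emph{Step 1 --- what $\NK=N$ really means.} Because $\omega_k$ is a busy interval, the processor is idle immediately before $s_k$ and again at $s_k+l_k$. Hence \ci~any $\tau_i$-job arriving strictly before $s_k$ must have completed by $s_k$, and \cii~any $\tau_i$-job arriving in $[s_k,s_k+l_k)$ must fit an entire $c_i$ units of work before $s_k+l_k$ and must not be followed by a further $\tau_i$-arrival inside $\omega_k$.

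\emph{Step 2 --- translating to inequalities.} Suppose $N\ge 1$ and let $t_1$ be the first $\tau_i$-release in $\omega_k$, with $u=t_1-s_k$. Periodicity gives $t_h=t_1+(h-1)p_i$, and the observations of Step~1 yield
\begin{align*}
0 &\le u, \\
u &\le p_i-c_i \quad\text{(previous release at $t_1-p_i$ finishes before $s_k$)}, \\
u+(N-1)p_i+c_i &\le l_k \quad\text{(last job completes inside $\omega_k$)}, \\
u+Np_i &\ge l_k \quad\text{(no $(N{+}1)$-th release in $\omega_k$)}.
\end{align*}
A feasible $u$ exists iff $\max\{0,\,l_k-Np_i\}\le\min\{p_i-c_i,\,l_k-(N-1)p_i-c_i\}$, which after a short case split collapses to
\[
(N-1)p_i+c_i \;\le\; l_k \;\le\; (N+1)p_i-c_i.
\]
The degenerate case $N=0$ is handled separately by applying the same bookkeeping to the most recent $\tau_i$-release before $s_k$; it yields the single condition $l_k\le p_i-c_i$, which coincides with the formula above at $N=0$ after clipping its negative left endpoint at~$0$.

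\emph{Step 3 --- reading off the two cases.} Writing $F_N=[\max\{0,(N-1)p_i+c_i\},\,(N+1)p_i-c_i]$, Step~2 says that $\NK=N$ is consistent with the observed $l_k$ iff $l_k\in F_N$. Two elementary calculations finish the proof: $F_N\cap F_{N+1}=[Np_i+c_i,\,(N+1)p_i-c_i]$, while $F_N\cap F_{N+2}=\emptyset$ (since $c_i>0$). Consequently, for $l_k\in[\max\{Np_i-c_i,0\},\,Np_i+c_i)$ the only $N'$ with $l_k\in F_{N'}$ is $N'=N$, giving case~(i), whereas for $l_k\in[Np_i+c_i,\,(N+1)p_i-c_i)$ exactly $N'\in\{N,N{+}1\}$ qualify, giving case~(ii).

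\emph{Main obstacle.} The cleanest step, Step~2, implicitly assumes that a previous $\tau_i$-release actually exists; this is true once the system leaves its startup transient, so the most delicate part of the write-up will be justifying that the \OBSERVER's measurements can be restricted to (or post-processed into) this steady-state regime, and tidying up the open/closed endpoint conventions so that the $(\cdot)^+$ clipping in the theorem statement is matched exactly.
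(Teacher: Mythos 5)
Your proof is correct, and it takes a genuinely different route from the paper's. The paper proceeds through two lemmas: first, that $\omega_k$ contains the $h^{th}$ job of $\tau_i$ exactly when it contains the arrival point $\sigma^{\tau_i}_h$ (your Step~1 re-derives precisely this, from the same busy-interval maximality facts), and second, a lattice-point count showing that $Np_i < l_k < (N+1)p_i$ forces $N$ or $N+1$ arrivals; the theorem is then finished by two bespoke boundary arguments (an interval with $Np_i - c_i \le l_k < Np_i$ cannot contain only $N-1$ arrival points because the busy interval must extend at least $c_i$ beyond any contained arrival, and an interval with $Np_i \le l_k < Np_i + c_i$ cannot start at an arrival point). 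You instead parametrize by the offset $u$ of the first in-interval release, write down the full linear feasibility system, obtain the exact feasible set $F_N=[\max\{0,(N-1)p_i+c_i\},\,(N+1)p_i-c_i]$ of lengths for each count, and read off both cases of the theorem uniformly from the computations $F_N\cap F_{N+1}=[Np_i+c_i,(N+1)p_i-c_i]$ and $F_N\cap F_{N+2}=\emptyset$. What your approach buys is an if-and-only-if characterization (achievability of each count, not merely exclusion) and a single mechanical computation covering both cases; the paper's argument is shorter but splits into ad hoc sub-cases. One item you flagged does need the promised tidying: with $F_N$ taken closed, the length $l_k=Np_i-c_i$ lies in both $F_{N-1}$ and $F_N$, so case~(i) at its left endpoint requires sharpening the no-further-arrival constraint to strict inequality ($u+N'p_i>l_k$, since a busy interval cannot end at an arrival, and the preceding job cannot finish exactly at $s_k$ by maximality of $\omega_k$), which half-opens each $F_{N'}$ on the right and restores exact agreement with the theorem's half-open ranges. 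Your other caveat --- the existence of a preceding release of $\tau_i$ --- is shared implicitly by the paper: its point-counting lemma silently assumes the arrival progression blankets a neighborhood of $\omega_k$, i.e., that observation occurs after the startup transient, so this is not a gap relative to the paper's own proof.
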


We will use the following two lemmas to prove Theorem~\ref{thm:estimate_nk}.
Consider $\sigma^{\tau_i}_{0}$, $\sigma^{\tau_i}_{1}$, $\sigma^{\tau_i}_{2}$, $\cdots$ as the arrival times of task $\tau_i$ and define $\mathcal{S}=\{\sigma^{\tau_i}_{0}, \sigma^{\tau_i}_{1}, \sigma^{\tau_i}_{2}, \cdots\}$. 
\begin{lemma}
\label{lem:0}
\vspace{-0.2\baselineskip}
 A busy interval contains the $h^{th}$ job of $\tau_i$ if and only if it contains $\sigma^{\tau_i}_{h}$. 
\end{lemma}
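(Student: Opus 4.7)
The plan is to prove Lemma~\ref{lem:0} by direct unpacking of the definition of a busy interval, since the statement is essentially a characterization of which jobs ``belong'' to a maximal busy period in terms of their release times. Recall from \cref{subsec:sysmodel} that $\omega_k = [s_k, s_k + l_k)$ is a maximal interval during which the processor is continuously busy; i.e., the processor is idle immediately before $s_k$ and immediately after $s_k + l_k$, and no idle instant occurs inside $\omega_k$. I will handle the two implications separately.

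For the ($\Leftarrow$) direction, I assume $\sigma^{\tau_i}_{h} \in [s_k, s_k + l_k)$ and show that the $h^{th}$ job of $\tau_i$ executes entirely within $\omega_k$. At time $\sigma^{\tau_i}_{h}$ the job becomes ready and joins the queue of pending work carried by $\omega_k$. Because $\omega_k$ is maximal, it extends until all work released up to that point has been drained; in particular, no idle instant can occur before the $h^{th}$ job finishes, so its start time $\delta^{\tau_i}_{k,h}$ and completion time must lie inside $\omega_k$. Hence the busy interval contains this job.

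For the ($\Rightarrow$) direction, I assume $\omega_k$ contains the $h^{th}$ job of $\tau_i$ (i.e., that job executes at some point during $\omega_k$) and show $\sigma^{\tau_i}_{h} \in [s_k, s_k + l_k)$. The arrival cannot occur at a time $\geq s_k + l_k$, since then no part of the job could execute before $\omega_k$ ends. It also cannot occur at a time $< s_k$: if the job had already completed by time $s_k$, it would not execute inside $\omega_k$; if it were still pending at time $s_k^{-}$, the processor would have been busy just before $s_k$, contradicting the maximality of $\omega_k$. The only remaining possibility is $\sigma^{\tau_i}_{h} \in [s_k, s_k + l_k)$.

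I do not anticipate a real obstacle here; the only care needed is in the contrapositive step of the ($\Rightarrow$) direction, where one must invoke both maximality (to rule out arrivals strictly before $s_k$ with the job still pending) and completeness of prior work (to rule out arrivals strictly before $s_k$ with the job already finished). Once these two subcases are dispatched, the lemma follows, and it can then be used in the proof of Theorem~\ref{thm:estimate_nk} to reduce counting of jobs of $\tau_i$ in $\omega_k$ to counting elements of $\mathcal{S} \cap [s_k, s_k + l_k)$.
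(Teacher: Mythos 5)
Your proof is correct and takes essentially the same approach as the paper's: both arguments rest on the work-conserving observation that the processor stays continuously busy from $\sigma^{\tau_i}_{h}$ until the $h^{th}$ job of $\tau_i$ completes, combined with the maximality of the busy interval. You merely make explicit the two-directional case analysis and the half-open boundary convention that the paper compresses into its parenthetical remark that the endpoint of a busy interval cannot belong to $\mathcal{S}$.
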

\begin{proof}
The $h^{th}$ job of $\tau_i$ will be released at time $\sigma^{\tau_i}_{h}$ if there are no higher priority tasks running at that time, or, will be released immediately after the end of the higher priority tasks. 
In both cases, the system is busy from time $\sigma^{\tau_i}_{h}$ to at least the finishing time of the $h^{th}$ job of $\tau_i$. Therefore, a busy interval which contains $\sigma^{\tau_i}_{h}$ has the $h^{th}$ job of $\tau_i$, and a busy interval that contains the $h^{th}$ job of $\tau_i$, should have started at $\sigma^{\tau_i}_{h}$ or at a time before $\sigma^{\tau_i}_{h}$. (Note that the end point of a busy interval cannot belong to $\mathcal{S}$).
\end{proof}

\begin{lemma}
\label{lem:1}
If $l_k$ satisfies
$Np_i<l_k<(N+1)p_i,N=0,1,2,...$,
then task $\tau_i$ can only have arrived $N$ or $N+1$ times during the busy interval $w_k$.
\end{lemma}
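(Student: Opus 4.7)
The plan is to reduce Lemma~\ref{lem:1} to a counting argument about the arithmetic progression $\mathcal{S} = \{a_i + h p_i : h = 0, 1, 2, \ldots\}$ of arrival times of $\tau_i$. By Lemma~\ref{lem:0}, the number of jobs of $\tau_i$ inside $\omega_k$ equals the number of elements of $\mathcal{S}$ that fall in $[s_k, s_k + l_k)$; the right endpoint is excluded by the remark at the end of Lemma~\ref{lem:0}, since an arrival at that instant would extend the busy interval. So all I need to bound is this count, call it $M$.

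For the upper bound $M \le N+1$, I would argue by contradiction: if $\omega_k$ contained $N+2$ consecutive arrivals $\sigma^{\tau_i}_{h_0}, \ldots, \sigma^{\tau_i}_{h_0+N+1}$, the smallest and the largest would be exactly $(N+1) p_i$ apart, forcing $l_k \geq (N+1) p_i$ and contradicting the hypothesis. For the lower bound $M \ge N$, suppose $\omega_k$ contains exactly $M$ arrivals; they are necessarily index-consecutive (the spacing in $\mathcal{S}$ is constant), say $\sigma^{\tau_i}_{h_0}, \ldots, \sigma^{\tau_i}_{h_0+M-1}$. By maximality of $M$, $\sigma^{\tau_i}_{h_0-1}$ (when it exists) sits strictly before $s_k$ and $\sigma^{\tau_i}_{h_0+M}$ sits at or beyond $s_k + l_k$; subtracting these yields $l_k < \sigma^{\tau_i}_{h_0+M} - \sigma^{\tau_i}_{h_0-1} = (M+1) p_i$. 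Combining with the hypothesis $l_k > N p_i$ gives $N < M + 1$, i.e., $M \ge N$.

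The underlying fact is really just the elementary observation that a half-open real interval of length $l_k / p_i \in (N, N+1)$ contains either $N$ or $N+1$ points of any fixed arithmetic progression of step $1$. The only subtlety -- and thus the main, rather modest, obstacle -- is lining up the endpoint conventions for $[s_k, s_k+l_k)$ so that the strict and non-strict inequalities come out cleanly; the remark in Lemma~\ref{lem:0} pins this down. Once Lemma~\ref{lem:1} is in hand, I expect it to combine with the observation that each arrival of $\tau_i$ inside $\omega_k$ contributes a full $c_i$ of execution time to $l_k$ in order to produce the sharper dichotomy stated in Theorem~\ref{thm:estimate_nk}.
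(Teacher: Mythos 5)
Your proposal is correct and takes essentially the same route as the paper: Lemma~\ref{lem:0} converts the job count into the number of points of $\mathcal{S}$ falling in the busy interval (with the right endpoint excluded by the paper's remark), and the conclusion reduces to the fact that an interval of length strictly between $Np_i$ and $(N+1)p_i$ contains $N$ or $N+1$ points of a step-$p_i$ arithmetic progression. The paper merely asserts this counting fact, whereas you prove it explicitly via the span/maximality argument, so your write-up is simply a more detailed rendering of the same proof.
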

\begin{proof}
If $Np_i<l_k<(N+1)p_i$, then $w_k$ contains $N$ or $N+1$ points of $\mathcal{S}$. Therefore, by Lemma \ref{lem:0}, task $\tau_i$ can only arrive $N$ or $N+1$ times during the busy interval $w_k$.
\end{proof}

\begin{proof}[Proof of Theorem 1]
(i) If $Np_i-c_i\le l_k<Np_i$, then the busy interval cannot contain $N-1$ points from  $\mathcal{S}$, otherwise, task $\tau_i$ should have finished in a time interval less than $c_i$. Therefore, it exactly contains $N$ points from  $\mathcal{S}$.\\
If $Np_i\le l_k<Np_i+c_i$, the start point of the busy interval cannot belong to $\mathcal{S}$ (otherwise, the $l_k$ should be at least $Np_i+c_i$), therefore, it exactly contains $N$ points from $\mathcal{S}$.
Therefore, by Lemma \ref{lem:0}, in both cases, task $\tau_i$ can only have arrived $N$ times during the busy interval.\\
(ii) This part follows from Lemma \ref{lem:1} immediately.
\end{proof}

Note that Theorem~\ref{thm:estimate_nk} presented here does not take execution time variation in
$c_i$ into account for ease of exposition. 
Nevertheless, the theorem can be easily extended to cover variation by adding tolerance bounds for each variable.
%
Further, our evaluation (\cref{sec::eval}) takes execution time variation into account.   
%


Based on Theorem~\ref{thm:estimate_nk}, each task $\tau_i$, in worst case, may have two candidate $\NK$ values for a busy interval: $N$ and $N+1$.
Therefore, computing the right term in Equation (\ref{eqn:biLength}) with all possible $\NK$ values, there will be at most $2^n$ combinations. Nonetheless, only those combinations that yield exactly $l_k$ and satisfy Equation (\ref{eqn:biLength}) are shortlisted for further consideration.


\begin{example}
\label{ex:complete_estimate_nk_1}
Consider a task set $\Gamma=\{\tau_1, \tau_2, \tau_3\}$ and 4 captured busy intervals in a hyper-period, i.e., $LCM(5, 6, 10)=30$, as follows:
\begin{center}\footnotesize
\vspace{-0.5\baselineskip}
\begin{tabular}{|c||c|c|}
\hline 
 & $p_i$ & $c_i$ \\ 
\hline 
$\tau_1$ & 5 & 1 \\ \hline 
$\tau_2$ & 6 & 2 \\ \hline 
$\tau_3$ & 10 & 2 \\ \hline 
\multicolumn{3}{r}{} \\
\end{tabular} 
\quad
\begin{tabular}{|c||c|c|}
\hline 
$W$ & $l_k$ & Timestamp \\ 
\hline 
$\omega_1$ & 8 & [0,8] \\ 
\hline 
$\omega_2$ & 6 & [10,16] \\ 
\hline 
$\omega_3$ & 5 & [18,23] \\ 
\hline 
$\omega_4$ & 3 & [24,27] \\ 
\hline 
\end{tabular} 
\end{center}


\noindent Taking busy interval $\omega_4$ as an example, by applying Theorem~\ref{thm:estimate_nk} to $\omega_4$, we can compute $\vert J^{\tau_1}_4 \vert$, $\vert J^{\tau_2}_4 \vert$ and $\vert J^{\tau_3}_4 \vert$ values presented in the left table below.
And then we use Equation~(\ref{eqn:biLength})
to find possible $J_4$ combinations that can lead to the given busy interval duration $l_4=3$ as shown in the table to the right in the following.
As a result, most combinations are eliminated except the two, 
$\{1, 0, 1\}$ and $\{1, 1, 0\}$, that can lead to a busy interval length of $l_4=3$. In this case, both inferences will be retained for further processing.

\begin{center}\footnotesize
\vspace{-0.5\baselineskip}
\begin{tabular}{|c||c|}
\hline 
 & $\vert J^{\tau_i}_4 \vert$ \\ 
\hline 
$\tau_1$ & 0 or 1 \\ \hline 
$\tau_2$ & 0 or 1 \\ \hline 
$\tau_3$ & 0 or 1 \\ \hline 
\multicolumn{2}{r}{} \\
\multicolumn{2}{r}{} \\
\multicolumn{2}{r}{} \\
\multicolumn{2}{r}{} \\
\multicolumn{2}{r}{} \\
\multicolumn{2}{r}{} \\
\end{tabular} 
\quad
\begin{tabular}{|c c c||c|c}
\cline{1-4}
$\vert J^{\tau_1}_4 \vert$ & $\vert J^{\tau_2}_4 \vert$ & $\vert J^{\tau_3}_4 \vert$ & $l_4$ &  \\ \cline{1-4} \cline{1-4}
0 & 0 & 0 & 0 & \\ \cline{1-4} 
0 & 0 & 1 & 2 & \\ \cline{1-4} 
0 & 1 & 0 & 2 & \\ \cline{1-4} 
0 & 1 & 1 & 4 & \\ \cline{1-4} 
1 & 0 & 0 & 1 & \\ \cline{1-4} 
1 & 0 & 1 & 3 & $\surd$ \\ \cline{1-4} 
1 & 1 & 0 & 3 & $\surd$ \\ \cline{1-4} 
1 & 1 & 1 & 5 & \\ \cline{1-4} 
\multicolumn{4}{r}{$^*$desired $l_4=3$} & \multicolumn{1}{c}{} \\
\end{tabular} 
\end{center}
\vspace{-1\baselineskip}

\end{example}

\subsubsection{Inferring Task Arrival Windows}
\label{approach:arrival_windows}
Here, we estimate possible arrival times for each task. 
Rather than directly inferring a point for the arrival time $a_i$, we compute a possible window for that arrival, that we call \emph{arrival window}, defined as $\overline{a_i}$. 
It is done by computing each job's arrival window, defined as $\ATMEW$, in every busy interval and merging the results in the end for each task.

To identify potential arrival windows $\ATMEW$ for $\tau_i$ in $\omega_k$, we partition the busy interval $\omega_k = [\alpha, \beta]$ into the following three types of segments for each task $\tau_i$:\\

\vspace{-0.5\baselineskip}
\begin{tabularx}{\columnwidth}{l l X}
$-$ & \emph{0-interval}: & There is no arrival. \\ 
$-$ & \emph{1-interval}: & There exists exactly 1 arrival. \\ 
$-$ & \emph{0-1-interval}: & There may exist 0 or 1 arrivals. \\ 
\end{tabularx} 



%
%

\noindent The partitioning of the busy interval is done using the following theorem. 

\noindent
\begin{theorem}
\label{thm:arrival_windows}
Considering a task $\tau_i$ and a busy interval $w_k$ that has start time $\alpha$ and end time $\beta$. The partitioning of the busy interval is done by using the following equations:\\
(i) If $\tau_i$ has arrived exactly $N$ times:\\
\vspace{-0.5\baselineskip}
If $\displaystyle N=\Big\lceil \frac{l_k}{p_i} \Big\rceil$, the following segments are \emph{1-interval}:
\small
\begin{equation}
\label{eqn:N-1-interval}
\ATMEW = [\alpha+(h-1)p_i, 
 \beta-(N-h)p_i-c_i] \hspace{0.6cm} 1\le h \le N
\end{equation}
\normalsize

Else, the following segments are \emph{1-interval}:
\small
\begin{equation}
\label{eqn:N-1-interval-2}
\ATMEW = [\beta+(N+1-h)p_i, 
 \alpha+hp_i-c_i] \hspace{1cm} 1\le h \le N
\end{equation}
\normalsize

\noindent
(ii) If $\tau_i$ has potentially arrived either $N$ or  $N+1$ times: \\
\vspace{-0.2\baselineskip}
the following segments are \emph{1-interval}:
\small
\begin{equation}
\label{eqn:N-1-interval-3}
\ATMEW = [\alpha+(h-1)p_i, \alpha+hp_i-c_i]\hspace{1cm} 1\le h \le N
\end{equation}
\normalsize

and the following segments are \emph{0-1-interval}:
\small
\begin{equation}
\label{eqn:N-1-interval-4}
\ATMEW = [\alpha+(h-1)p_i, \beta-c_i]\hspace{1cm}  h=N+1
\end{equation}
\normalsize

where $\ATMEW$ is the $h^{th}$ arrival window for $\tau_i$ in $\omega_k$.
In both cases, the remainder of the busy interval is \emph{0-interval}.
\end{theorem}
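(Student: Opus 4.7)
The plan is to derive each arrival window from the periodic structure of $\tau_i$'s releases combined with the necessary conditions for the busy interval to end exactly at $\beta$. First I would invoke Lemma~\ref{lem:0} to identify every arrival of $\tau_i$ inside $\omega_k$ with a point of the periodic set $\mathcal{S}$, so that the relation $\sigma^{\tau_i}_{k,h} = \sigma^{\tau_i}_{k,1} + (h-1)p_i$ lets me reduce every constraint on the $h$-th arrival to one on the first arrival. Writing $t_1 := \sigma^{\tau_i}_{k,1}$, I would then collect the four basic inequalities forced by the existence of the busy interval: (a) $t_1 \ge \alpha$, because the first arrival lies in $\omega_k$; (b) $t_1 - p_i < \alpha$, since otherwise there would be one more arrival; (c) the last arrival of $\tau_i$ in $\omega_k$ must leave at least $c_i$ time units before $\beta$, because the job has to complete for the CPU to become idle at $\beta$; and (d) the arrival immediately after the last one must fall strictly beyond $\beta$. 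Each of these translates linearly into a bound on $t_h = t_1 + (h-1)p_i$.

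With these tools in hand I would split along the three formulas of the theorem, using Theorem~\ref{thm:estimate_nk} to fix the number of arrivals. In case (i) with $N=\lceil l_k/p_i\rceil$ one has $l_k < Np_i$, so inequality (d) is automatically implied by (b); only $t_1\ge\alpha$ and $t_N\le\beta-c_i$ remain binding, and substituting $t_h = t_1+(h-1)p_i$ gives exactly \eqref{eqn:N-1-interval}. In case (i) with $N \ne \lceil l_k/p_i \rceil$, i.e.\ $Np_i\le l_k<Np_i+c_i$, inequality (d) now activates as $t_1 > \beta - Np_i$, swapping the binding pair with (c) and yielding \eqref{eqn:N-1-interval-2}. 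In case (ii) the arrival count itself is uncertain, so I would admit both the ``$N$ arrivals'' and the ``$N+1$ arrivals'' scenarios, derive a $t_1$-range for each, take their union, and project periodically onto $t_h$: for $h\le N$ the union reduces to the 1-interval \eqref{eqn:N-1-interval-3}, while a potential $(N+1)$-st arrival lives in the 0-1-interval \eqref{eqn:N-1-interval-4}. A final verification would show that the advertised segments are pairwise disjoint and that any $t\in[\alpha,\beta]$ outside them violates one of (a)--(d), so it cannot host an arrival and therefore lies in a 0-interval.

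The principal obstacle will be case (ii). One must reason about both arrival counts simultaneously, track which side of each of (a)--(d) is active as $h$, $l_k$, $p_i$, $c_i$ vary, and then compress the union of two different $t_h$-ranges into a single closed-form window whose endpoints dominate every feasible $\sigma^{\tau_i}_{k,h}$. The delicate step is checking that $\alpha+(h-1)p_i$ and $\alpha+hp_i-c_i$ -- and $\beta-c_i$ for the last window -- are exactly the tightest bounds that survive the union, since the binding inequality can switch between (c)/(d) and (a)/(b) depending on whether the true count is $N$ or $N+1$.
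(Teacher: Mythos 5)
Your skeleton (reduce everything to the first arrival $t_1$ via Lemma~\ref{lem:0} and periodicity, split cases by the $l_k$-ranges of Theorem~\ref{thm:estimate_nk}, then intersect linear bounds) is the same route the paper takes, and your derivation of \eqref{eqn:N-1-interval} in case (i) with $N=\lceil l_k/p_i\rceil$ goes through. But there is a genuine gap: your constraint set (a)--(d) only involves arrivals inside $[\alpha,\beta]$ and the first arrival after $\beta$; it never uses the fact that the busy interval \emph{starts} at $\alpha$ together with periodicity extended \emph{backwards} past $\alpha$. Since an arrival at $t_h\in[\alpha+hp_i-c_i,\,\alpha+hp_i]$ forces an arrival at $t_h-hp_i\in[\alpha-c_i,\alpha]$, whose job cannot complete before $\alpha$, the processor would be busy immediately before $\alpha$ and the busy interval could not start there. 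This is exactly the step the paper's proof inserts (``if there is an arrival in $[\alpha+Np_i-c_i,\alpha+Np_i]$, then there should be an arrival in $[\alpha-c_i,\alpha]$; therefore the busy interval cannot start at $\alpha$''), and it strengthens your (b) from $t_1<\alpha+p_i$ to $t_1\le\alpha+p_i-c_i$. Without it you cannot obtain the right-hand endpoints $\alpha+hp_i-c_i$ at all.

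Concretely, two of your three cases fail as sketched. In case (i) with $N\ne\lceil l_k/p_i\rceil$ (so $Np_i\le l_k<Np_i+c_i$), your binding pair (d)/(c) yields $t_h\in(\beta-(N+1-h)p_i,\;\beta-(N-h)p_i-c_i]$, whose upper endpoint exceeds the claimed $\alpha+hp_i-c_i$ by exactly $l_k-Np_i>0$, so \eqref{eqn:N-1-interval-2} is not derivable from (a)--(d). In case (ii) ($Np_i+c_i\le l_k<(N+1)p_i-c_i$), the union of your $N$- and $(N+1)$-arrival scenarios gives, for $t_1$, essentially $[\alpha,\alpha+p_i)$ rather than $[\alpha,\alpha+p_i-c_i]$; indeed the point $t_1=\alpha+p_i-c_i/2$ satisfies all of (a)--(d) in the $N$-arrival scenario (check: $t_N=\alpha+Np_i-c_i/2\le\beta-c_i$ and $t_N+p_i>\beta$ both follow from the case-(ii) range of $l_k$), yet the theorem classifies $(\alpha+p_i-c_i,\alpha+p_i)$ as \emph{0-interval}. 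This also falsifies your closing claim that any $t$ outside the advertised segments violates one of (a)--(d), and it is not a cosmetic tightness issue: the theorem's assertion that the remainder is \emph{0-interval} is precisely the statement that $[\alpha+hp_i-c_i,\alpha+hp_i]$ is arrival-free, which only the backwards-completion argument provides. Adding that fifth constraint to your list repairs the proof and collapses it back onto the paper's argument; equations \eqref{eqn:N-1-interval-3} and \eqref{eqn:N-1-interval-4} then follow as you intended. (As a side note, the printed lower endpoint $\beta+(N+1-h)p_i$ in \eqref{eqn:N-1-interval-2} is a sign typo for $\beta-(N+1-h)p_i$, as the paper's own proof text confirms; your (d)-derived bound agrees with the corrected form.)
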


\begin{proof}
(i) From Theorem~\ref{thm:estimate_nk}, If $\NK=N$, then
$(N-1)p_i+c_i\le l_k\le(N+1)p_i-c_i$.
We partition the busy interval as follows:
$[\alpha,\beta]=[\alpha,\alpha+p_i]\cup[\alpha+p_i,\alpha+2p_i]\cup\cdots\cup[\alpha+hp_i,\beta]$.
If $(N-1)p_i+c_i\le l_k\le Np_i$, or equivalently, $N=\lceil \frac{l_k}{p_i} \rceil$, then $h=N-1$ and there should be an arrival in each interval of the partition above. Also, there cannot be an arrival in $[\beta-c_i,\beta]$, otherwise, the busy interval cannot terminate at $b$. Therefore, there should be an arrival in $[\alpha+(N-1)p_i,\beta-c_i]$.\\
By the periodicity assumption, task $\tau_i$ arrives every $p_i$ seconds, i.e., if we have an arrival at time $t$, then, $t\pm p_i$ is also arrival times. So, shifting $[\alpha+(N-1)p_i,\beta-c_i]$ by integer multiples of $p_i$ to the left and taking its intersection with other intervals of the partition, we get Equation~(\ref{eqn:N-1-interval}).\\
If $Np_i\le l_k\le (N+1)p_i-c_i$, then $h=N$ and there should not be any arrivals in the last interval of the partition above. Hence, because of the periodicity, the last arrival should be in the interval $[\beta-p_i, \alpha+Np_i]$. 
But, if there is an arrival in interval $[\alpha+Np_i-c_i, \alpha+Np_i]$, then there should be an arrival in the interval $[\alpha-c_i, \alpha]$. Therefore, the busy interval cannot start at $\alpha$. This implies that the last arrival should be in the interval $[\beta-p_i, \alpha+Np_i-c_i]$. 
Finally, because of the periodicity, by shifting $[\beta-p_i, \alpha+Np_i-c_i]$ by integer multiples of $p_i$ to the left and taking its intersection with other intervals of the partition, we get Equation~(\ref{eqn:N-1-interval-2}).\\
\noindent
(ii) Using Theorem 1 again, if $\NK=N\text{ or }N+1$, then
$Np_i+c_i\le l_k\le (N+1)p_i-c_i$.
Therefore, in the partition, $h=N$, and we cannot say anything about the last interval of the partition. So, we mark it as a 0-1-interval, with the consideration that similar to part (i), there cannot be an arrival in $[\beta-c_i,\beta]$. This gives us Equation~(\ref{eqn:N-1-interval-4}). Also, there should be an arrival in all other intervals of the partition, with the consideration that similar to part (i), there cannot be an arrival in $[\alpha+hp_i-c_i, \alpha+hp_i]$, for $h=1,..,N$. This gives us Equation~(\ref{eqn:N-1-interval-3}).
\end{proof}


Figure~\ref{fig:arrv_seg} shows the use of Theorem~\ref{thm:arrival_windows}. Part (a) depicts a case in which $\NK=3$ and $\beta-(3-h)p_i-c_i<\alpha+hp_i-c_i$, for $h=1,2,3$. Note that 1-intervals for $\tau_i$ in a busy interval should be repeated every $p_i$.
Part (b) depicts a case in which $\NK=2$ or $3$ and we are not able to determine whether the last interval contains an arrival; hence, it will be a \emph{0-1-interval}.

\begin{figure}[h]
\vspace{-0.5\baselineskip}
    \centering
    \begin{subfigure}[t]{0.49\columnwidth}
        \centering
        \includegraphics[width=0.99\columnwidth]{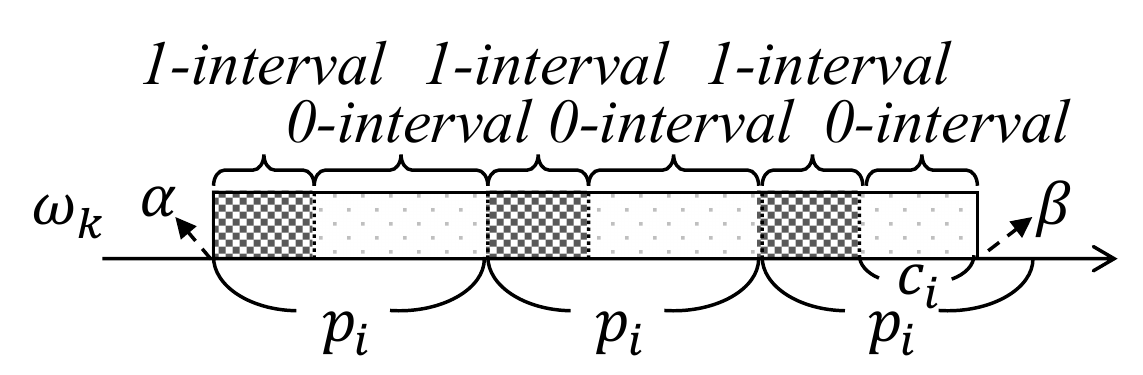}
        \vspace{-1.5\baselineskip}
        \caption{}
	\label{fig:arrv_seg_a}
    \end{subfigure}
    \begin{subfigure}[t]{0.49\columnwidth}
        \centering
       \includegraphics[width=0.99\columnwidth]{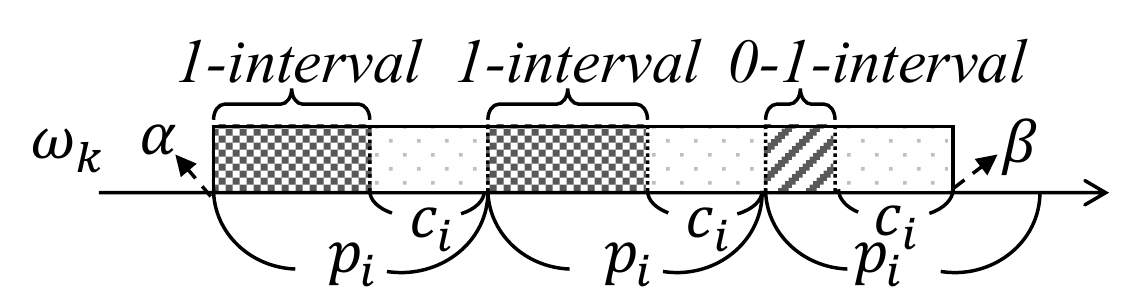}
        \vspace{-1.5\baselineskip}
        \caption{}
	\label{fig:arrv_seg_b}
    \end{subfigure}
    \vspace{-0.5\baselineskip}
    \caption{Estimate of arrival time locations.}
    \label{fig:arrv_seg}
    \vspace{-0.5\baselineskip}
\end{figure}


We then use the above segments to find the initial arrival window $\overline{a_i}$ for $\tau_i$. 
Because of periodicity,
a task must arrive exactly once in each time interval of length equal to its period. 
Moreover, without considering jitters, the relative arrival time in each period should be consistent. 
Thus, if we divide one hyper-period into intervals of length $p_i$ and overlap them together, we can obtain a distribution of possible arrival locations of $\tau_i$ in its period $p_i$.
Here, we take the overlapped segments that have the highest occurrence probability as the arrival window $\overline{a_i}$. 
We will use the following example to demonstrate how this works.

\begin{figure*}[t]
\centering
    \begin{subfigure}[t]{0.27\linewidth}
        \centering
        \includegraphics[width=0.9\columnwidth]{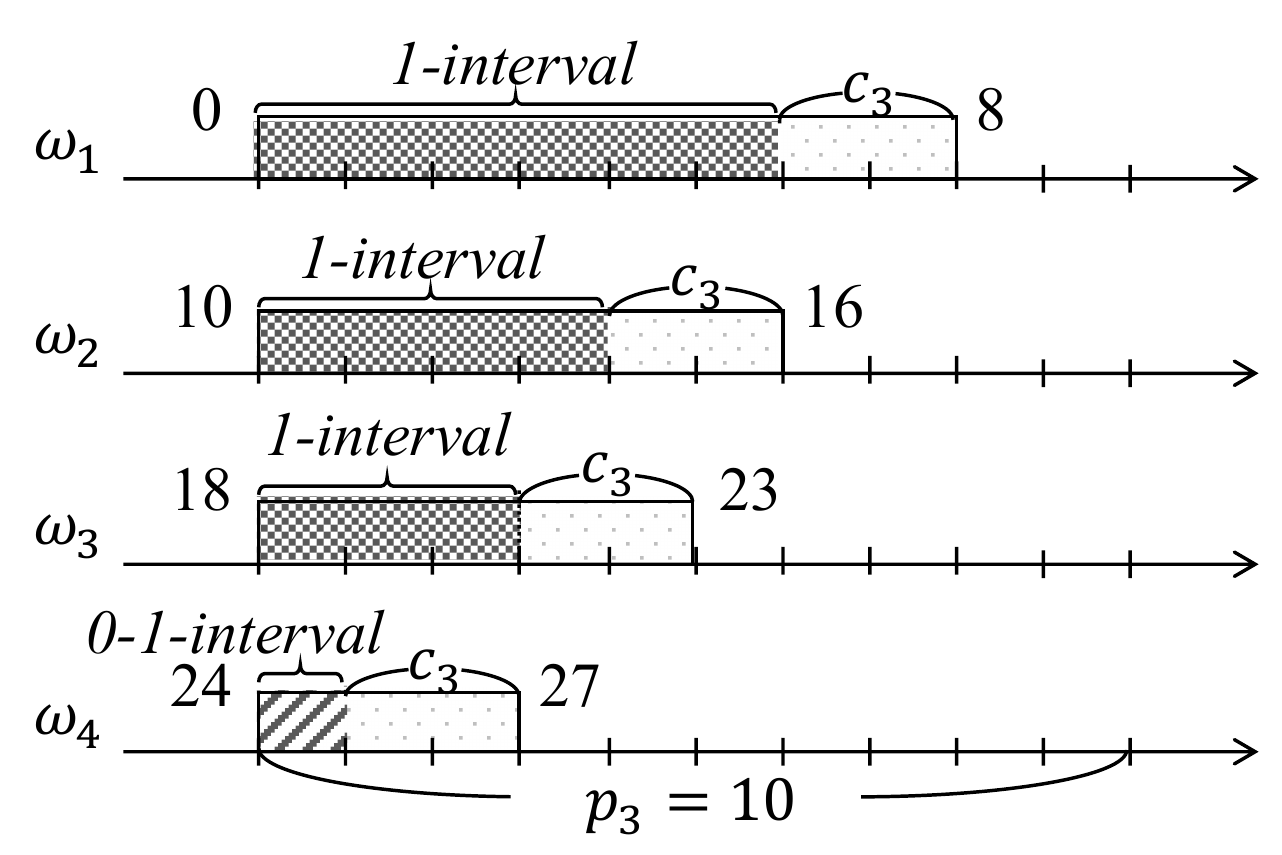}
\vspace{-0.35\baselineskip}
\caption{Partition busy intervals for $\tau_3$ based on Theorem~\ref{thm:arrival_windows}.}
    \end{subfigure}%
        ~
    \begin{subfigure}[t]{0.25\linewidth}
        \centering     
        \includegraphics[width=0.90\columnwidth]{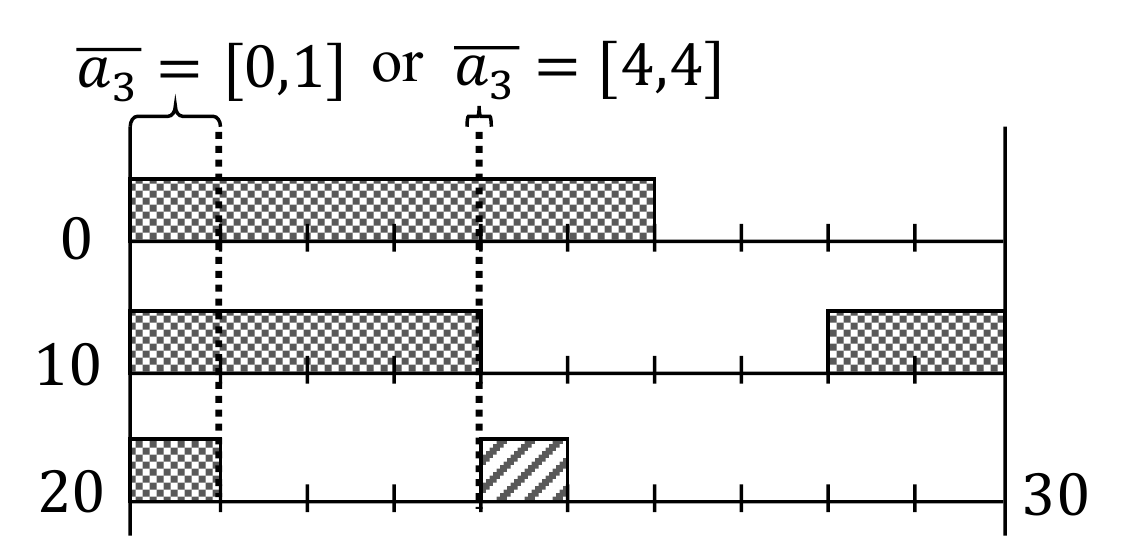}
        \vspace{-0.35\baselineskip}
        \caption{Arrival windows of $\tau_3$}
    \end{subfigure}%
        ~ 
    \begin{subfigure}[t]{0.17\linewidth}
        \centering     
        \includegraphics[width=0.90\columnwidth]{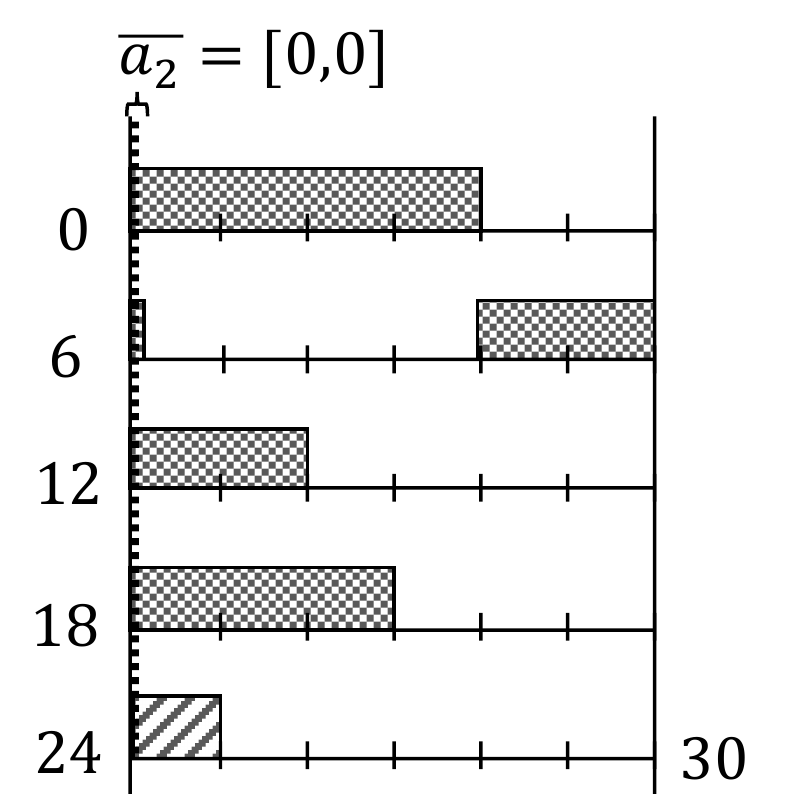}
        \vspace{-0.35\baselineskip}
        \caption{Arrival windows of $\tau_2$}
    \end{subfigure}
    ~
     \begin{subfigure}[t]{0.25\linewidth}
        \centering     
    \includegraphics[width=0.9\columnwidth]{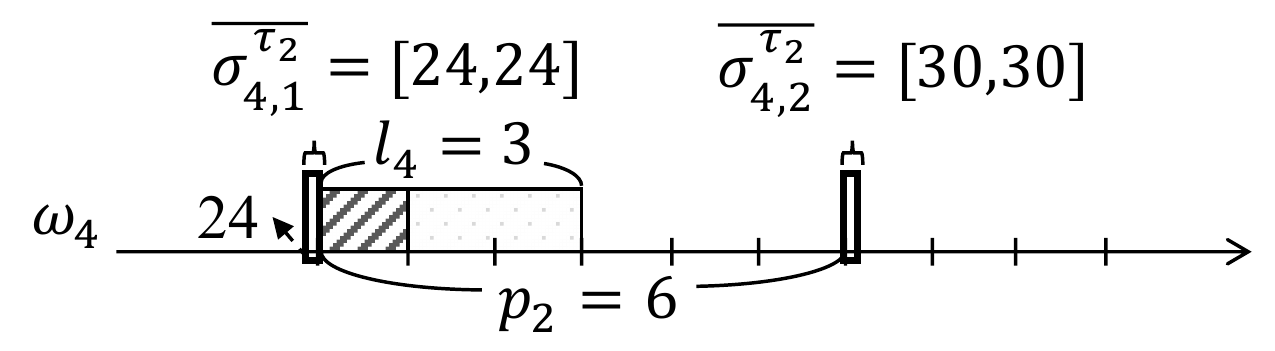}
        \vspace{-0.35\baselineskip}
        \label{fig:example_remove_mismatched}
        \caption{Validate $\vert J^{\tau_2}_4 \vert$}
    \end{subfigure}   

  \vspace{-0.5\baselineskip}
  \caption{Example~\ref{ex:complete_infer_arrival_windows} and \ref{ex:complete_remove_mismatched_nk}, compute possible arrival windows and validate inferences.}
    \label{fig:example_intersect_arrival_window}
    \vspace{-1.5\baselineskip}

\end{figure*}

\begin{example}
\label{ex:complete_infer_arrival_windows}
Here we consider task $\tau_3$ from Example~\ref{ex:complete_estimate_nk_1} to demonstrate how the arrival window $\overline{a_3}$ is derived.
By following Theorem~\ref{thm:arrival_windows}, four busy intervals can be partitioned into segments
as shown in Figure~\ref{fig:example_intersect_arrival_window}(a), and the corresponding overlaps are plotted in Figure~\ref{fig:example_intersect_arrival_window}(b).
In this case, there are two 
segments that have the highest overlap count, 3,
that represent two possible arrival windows $\overline{a_3}=[0,1]$ or $\overline{a_3}=[4,4]$ for task $\tau_3$.
Similarly, arrival windows for $\tau_1$ and $\tau_2$ can be computed as $\overline{a_1}=[0,0]$ and $\overline{a_2}=[0,0]$, respectively. Figure~\ref{fig:example_intersect_arrival_window}(c) shows the arrival window for $\tau_2$.

\end{example}

\begin{figure}[t]
    \centering
    \begin{subfigure}[t]{0.49\columnwidth}
        \centering
        \includegraphics[width=0.99\columnwidth]{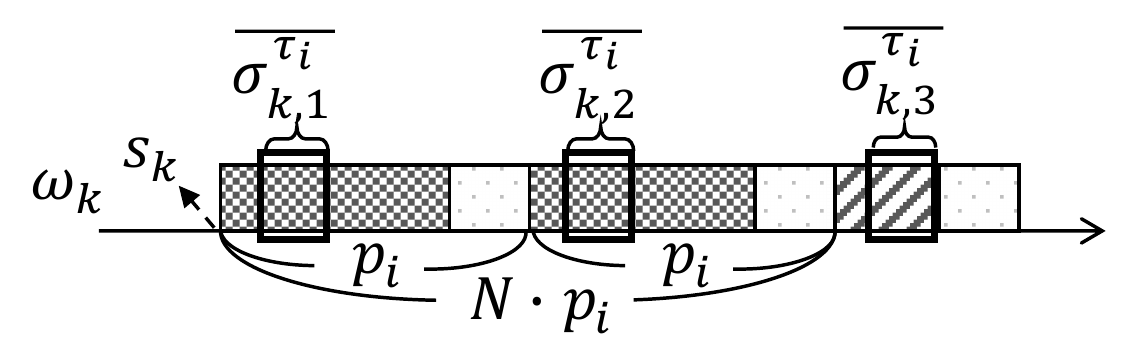}
        \vspace{-1.5\baselineskip}
        \caption{$\NK=N+1$ is clarified.}
    \end{subfigure}%
    \begin{subfigure}[t]{0.49\columnwidth}
        \centering
        \includegraphics[width=0.99\columnwidth]{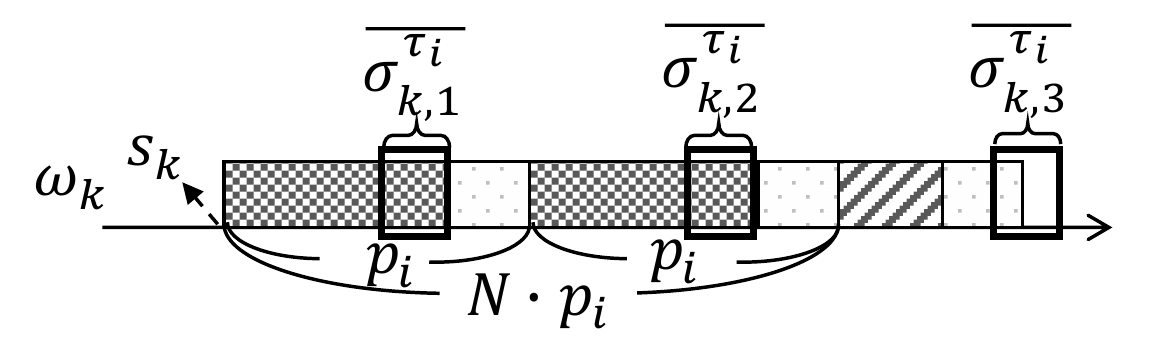}
        \vspace{-1.5\baselineskip}
        \caption{\emph{0-1-interval} has a conflict thus $\NK=N$ is clarified.}
    \end{subfigure}
    \vspace{-0.5\baselineskip}
    \caption{Examples of validating $\NK$ values with applying arrival window $\overline{a_i}$ to busy interval $\omega_k$.}
    \label{fig:arriv_to_nk}
    \vspace{-1\baselineskip}
\end{figure}

\subsubsection{Eliminating Ambiguous Estimates}
\label{approach:remove_mismatched_nk}

As illustrated in Figure~\ref{fig:example_intersect_arrival_window}(b), the ambiguity in arrival windows is caused by \emph{0-1-interval} segments in each arrival window. 
In such a case, if one of the \emph{0-1-interval} segments can be confirmed as \emph{1-interval}, then we know the correct arrival window. Since the estimation of arrival windows relies on \emph{1-interval} and \emph{0-1-interval} segments that are computed based on $\NK$ -- some of which may have ambiguous $N$ or $N+1$ values, the solution is to eliminate the ambiguity in $\NK$. 
This can be done by applying the arrival window $\overline{a_i}$ to each busy interval to validate the estimated $\NK$ values.
%
%
Figure~\ref{fig:arriv_to_nk} illustrates the process of removing mismatched $\NK$ from a busy interval $\omega_k$ with ambiguous $\NK$ values by applying inferred arrival windows $\overline{a_i}$. The $h^{th}$ arrival window $\ATMEW$ of $\tau_i$ in $\omega_k$ is obtained using:

\footnotesize
\vspace{-1.2\baselineskip}
\begin{align}
\ATMEW =  [ a_{i,begin} + \Big(\ceil[\bigg]{ \dfrac{s_k}{p_i} }+h-1 \Big) p_i , 
 a_{i,end} + \Big(\ceil[\bigg]{ \dfrac{s_k}{p_i} }+h-1 \Big)  p_i] \label{equ:job_arrival_window}
\vspace{-0.5\baselineskip}
\end{align}
\normalsize
\noindent
where $a_{i,begin}$ and $a_{i,end}$ are the beginning and the end time points of the initial arrival window $\overline{a_i}$. Figure~\ref{fig:arriv_to_nk} considers the simple case where the arrival window is a continuous interval (shown by the black rectangle). 
That is, the results of the previous step 
involving the computation of a distribution 
are the black rectangles -- as expected, they repeat every $p_i$ seconds. Part (a) of this figure shows the case where the arrival window overlaps with a 0-1-interval. This implies that the 0-1-interval is in fact a 1-interval. 
Therefore, the inference 
becomes $N+1$. Part (b) shows the case that the arrival window does not overlap with a 0-1-interval -- it implies that the 0-1-interval is in fact a 0-interval leaving $N$ as the only possibility.



By removing mismatched $\NK$ values,
the number of possible task combinations for that interval is reduced, sometimes to a unique combination. This reduced set of possible $\NK$ values is then used to update the arrival windows iteratively until the values are stabilized (see Figure~\ref{fig:attack_flow_chart}).

\begin{example}
\label{ex:complete_remove_mismatched_nk}

\noindent
Consider busy interval 
$\omega_4 = [24, 27]$ 
that has two possible task composition inferences for 
$\{\vert J^{\tau_1}_4 \vert, \vert J^{\tau_2}_4 \vert, \vert J^{\tau_3}_4 \vert\}$
in Example~\ref{ex:complete_estimate_nk_1}: $\{1, 0, 1\}$ and $\{1, 1, 0\}$. 
By applying $\overline{a_2}$ to $\omega_4$ as shown in Figure~\ref{fig:example_intersect_arrival_window}(d), we confirm that $\tau_2$ should have arrived one time. This validates the inference of $\{1, 1, 0\}$ and eliminates $\{1, 0, 1\}$.
Once the ambiguity in $\omega_4$ is removed, the only \emph{0-1-interval} for inferring $\overline{a_3}$ in Example~\ref{ex:complete_infer_arrival_windows} can be identified as \emph{0-interval} based on the correct inference $\{1, 1, 0\}$. 
This leaves the interval on the left in Figure~\ref{fig:example_intersect_arrival_window}(b) as the only correct arrival window for $\tau_3$.

\end{example}

\subsubsection{Reconstructing Schedules}
\label{approach:reconstruct_schedules}
\label{approach:arrival_windows_to_arriavl_times}
The final step is to generate the start time $\STME$ of each job to reconstruct the schedule. We do this by feeding $\Gamma$ along with inferred arrival times $a_i$ into a \emph{compact scheduling translator} (to be explained below).
However, at the end of the previously discussed refinement loop, some tasks may still end up with a wider arrival window that cannot be narrowed further. To use a \emph{compact scheduling translator}, an exact point for each arrival $a_i$ is needed. We propose the \emph{beginning point of the first arrival window} as the exact arrival time for such tasks. The reasons are:
\ci to make sure that jobs in a busy interval do not become disconnected and \cii since this choice indicates the earliest possible arrival time of a job, attacks launched using this arrival time will never miss the job.
%
Once each arrival time $a_i$ is determined, 
the \emph{compact scheduling translator} is ready to reconstruct the schedule. For a selected busy interval $\omega_k$, the translator first calculates $\ATME$ for every involved job by using 
$\ATME = a_i + \Big(\ceil[\bigg]{\dfrac{s_k}{p_i}}+h-1 \Big)p_i$
adapted from Equation~(\ref{equ:job_arrival_window}).
%
Then, these $\ATME$ values can be interpreted as a prearranged arrival queue where the scheduling translator only processes the given jobs. The output of this process is the start time $\STME$ of each job within the busy interval $\omega_k$.

Figure~\ref{fig:reconstruct_schedules} presents an example of schedule reconstruction of busy interval $\omega_k$ in the presence of $\tau_i$ and $\tau_{i+1}$ where $\vert J^{\tau_i}_k \vert = 2$, $\vert J^{\tau_{i+1}}_k \vert = 3$ and $pri_i>pri_{i+1}$. Part (a) shows the arrival times of each job that are obtained from arrival windows and (b) presents the start times output from the scheduling translator. By repeating this process for all busy intervals, the full schedule can be reconstructed.

\begin{figure}[h]
    \centering
    \begin{subfigure}[t]{0.49\columnwidth}
        \centering
        \includegraphics[width=0.99\columnwidth]{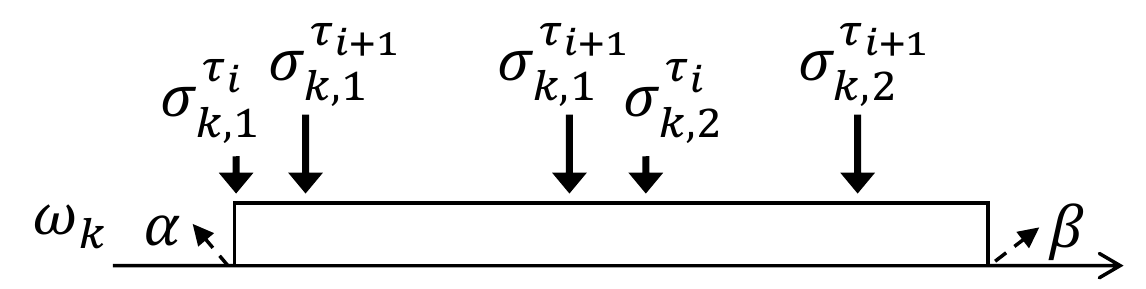}
        \vspace{-1\baselineskip}
        \caption{Inference of arrival times in $\omega_k$.}
    \end{subfigure}%
    \begin{subfigure}[t]{0.49\columnwidth}
        \centering
        \includegraphics[width=0.99\columnwidth]{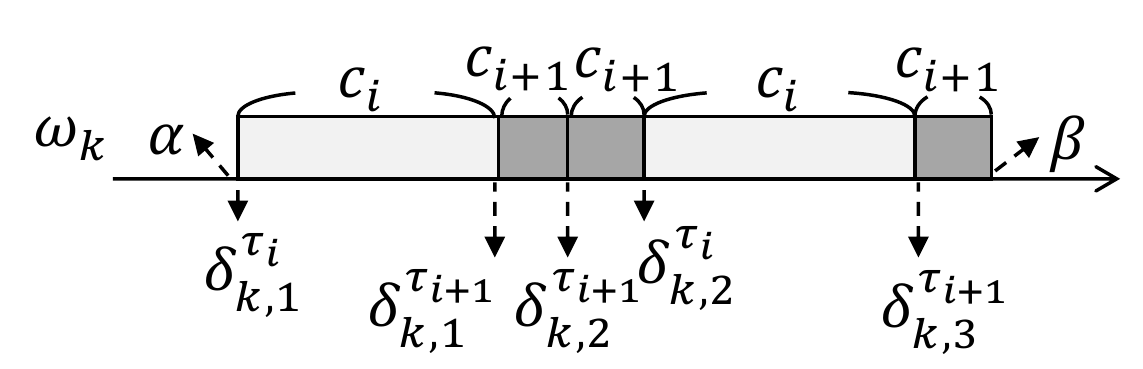}
        \vspace{-1\baselineskip}
        \caption{Translated start times.}
    \end{subfigure}
    \vspace{-0.5\baselineskip}
    \caption{Translation of start times $\STME$ from arrival times $\ATME$ in $\omega_k$ where $pri_i > pri_{i+1}$.}
    \label{fig:reconstruct_schedules}
   \vspace{-1\baselineskip}
\end{figure}


%
%
%
%
%
%
%


\section{Evaluation}
\label{sec::eval}

The \ATTCKNF algorithm has been implemented in a simulation platform as well as on {\em a hardware board running a real-time operating system}. 
The simulator is used to explore a larger design space while the implementation on the hardware board demonstrates the feasibility of carrying out such attacks on realistic systems. 
In this section, we first propose metrics that can be used to evaluate the performance of our algorithms. We then present an evaluation of the algorithm on both simulation engines and hardware board implementation.

\subsection{Performance Metrics}
\label{subsec::metrics}
One way to evaluate the performance of such algorithms is to compare the schedules from the 
output with that from the ground truth. 
Let the actual start times for $\tau_i$ be $\{{\delta^{\tau_i}_1}^*,\cdots,{\delta^{\tau_i}_u}^*\}$ and the estimated start times be $\{\delta^{\tau_i}_1,\cdots,\delta^{\tau_i}_u\}$. 
Let $E(\tau_i)_r={\delta^{\tau_i}_r}^*-\delta^{\tau_i}_r$ be the error in estimating the start time of the $r^{th}$ appearance of task $\tau_i$. Therefore, for task $\tau_i$ we have the errors $E(\tau_i)=\{E(\tau_i)_1,\cdots,E(\tau_i)_u\}$. We define the standard deviation of these errors from zero as
$SD_i=\sqrt{\frac{1}{u}\sum_{r=1}^s(E(\tau_i)_r)^2}$.
%
We can now define the precision of the estimation of the start time of task $\tau_i$ as $(1-\frac{SD_i}{p_i})$ where $p_i$ is the period of task $\tau_i$. This value is a number in $[0,1]$ where $1$ indicates an {\em exact estimation} of the start times.
For a task set containing $n$ tasks, we define the overall 
precision ratio of the algorithm as the 
arithmetic
mean of the estimation precision of tasks in the set, as follows:
\vspace{-0.5\baselineskip}
\small
\begin{equation}
\vspace{-\baselineskip}
\eta'=\frac{1}{n}\sum_{i=1}^n(1-\frac{SD_i}{p_i})
\label{equ:precision_ratio}
\vspace{1\baselineskip}
\end{equation}
\normalsize

\subsection{Simulation-based Evaluation}
\label{subsec:simulation_eval}
\subsubsection{Simulation Setup}
\label{subsec::simulation_setup}
The \ATTCKNF approach was evaluated using 
both 
an internally developed {\em scheduling simulation tool} and real hardware. 
The simulation tool was used to test the scalability of the algorithm and also to test with a more diverse set
of real-time task combinations. 



\noindent
\textbf{Task Set Generation:}
We apply \ATTCKNF to randomly generated synthetic task sets
and check whether the inference of start times matches the corresponding ground truth. The task sets are grouped by utilization from $[0.001+0.1\cdot x,  0.1+0.1\cdot x]$ where $0 \leq x \leq 9$. For example, the $[0.501, 0.6]$ group contains the task sets that occupy $50.1\%$ to $60\%$ of CPU utilization. 
Each utilization group consists of $6$ subgroups that have a fixed number of tasks from $10$ to $15$ respectively. Each subgroup contains $100$ task sets. In other words, $600$ task sets are generated in each utilization group resulting in $6000$ task sets to test for {\em one graph}.
For ease of comparison, we generate task sets with periods that are computed from the factors selected from $[2, 3, 5, 7, 11, 13]$.
Note that The resulting task set may contain some tasks that are harmonic. 
Then, the initial arrival time (\ie task offset) for a task is randomly selected between $0$ and the task's period (\ie $0 \leq a_i < p_i$).

\noindent
\textbf{Scheduling Algorithm:}
We use the commonly used rate-monotonic algorithm \cite{LiuLayland1973} to assign the priorities of tasks,
\ie a task with a shorter period is assigned a higher priority. We only pick those task sets
that are schedulable by fixed-priority scheduling algorithms.



\noindent
\textbf{Execution Time Variation:}
We use normal distribution to produce execution time variation.
A task set is first generated from the aforementioned task set generator that guarantees schedulability.
Then, for a task $\tau_i$, the average execution time $c_i$ is computed by $c_i=wcet_i\cdot 80\%$, where $80\%$ is chosen empirically.
Next, we fit a normal distribution $\mathcal{N}(\mu, \sigma^2)$ for the task $\tau_i$. We let the mean value $\mu$ be $c_i$ and find the standard deviation $\sigma$ with which the cumulative probability $P(X \leq wcet_i)$ is $99.99\%$.
As a result, such a normal distribution produces variation of which $95\%$ are within $\pm 10\% \cdot wcet_i$.
To not violate the schedulability, we adjust the  deviated execution time to be WCET if it exceeds WCET. 

\subsubsection{Simulation Results}
\label{subsec::simulation_results}
We examine the factors affecting the performance of the \ATTCKNF algorithm.
We test the algorithm with task sets generated under the criteria introduced in \cref{subsec::simulation_setup} (unless stated otherwise) and evaluate the precision ratio using Equation~(\ref{equ:precision_ratio}). 

\noindent\textbf{Empirical Baseline of Inference Precision:}
To understand how the precision ratio can reflect the correctness of inferences, we compare the \ATTCKNF algorithm with a naive algorithm -- the initial arrival times are inferred by choosing random time points. 
%
Figure~\ref{fig:naive_algo} shows that the naive algorithm 
yields inference precision ratios ranging from $0.04$ to $0.45$ -- this means that
an inferred start time instant has an error over $55\%$ of the period of the corresponding task when compared to the actual instant. It also implies that an attacker may miss an entire computation interval of the victim task and target the wrong interval.
%
%
%
In contrast, when using the \ATTCKNF algorithm, the inference precision ratio remains high, leading to an average of $0.95$ in precision ratio. This represents an average deviation of $5\%$ (or less) between an inferred start time and the actual start time.


%

\begin{figure}[t]
\vspace{-0.5\baselineskip}
  \centering
  \includegraphics[width=0.75\columnwidth]{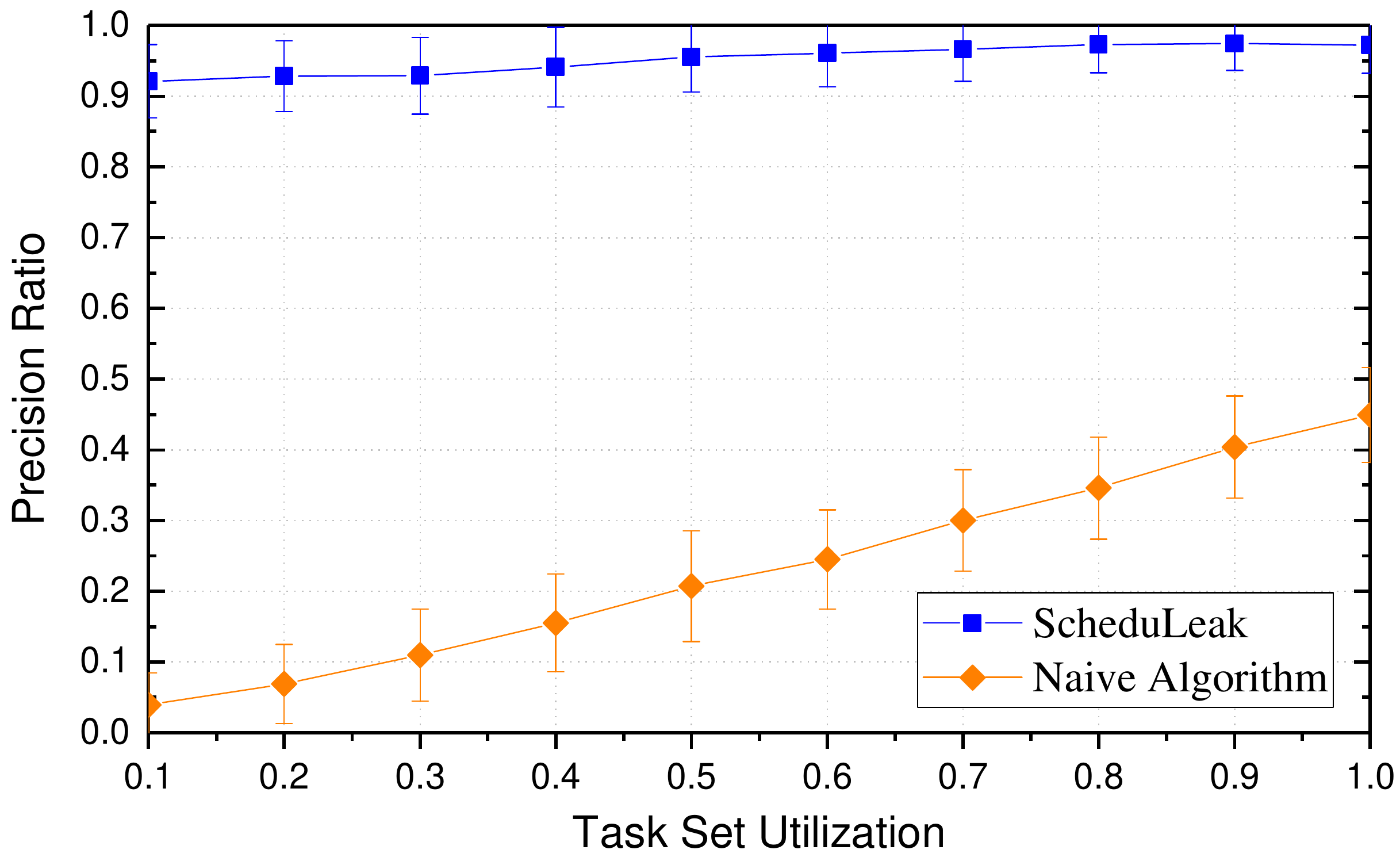}
\vspace{-0.5\baselineskip}
\caption{A naive algorithm that infers initial arrival times by selecting random time points gives an empirical lower bound of the inference precision ratio. X-axis represents the utilization group, and Y-axis is the corresponding precision ratio.}
\label{fig:naive_algo}
\vspace{-1\baselineskip}
\end{figure}

\begin{figure}[t]
  \centering
  \includegraphics[width=0.75\columnwidth]{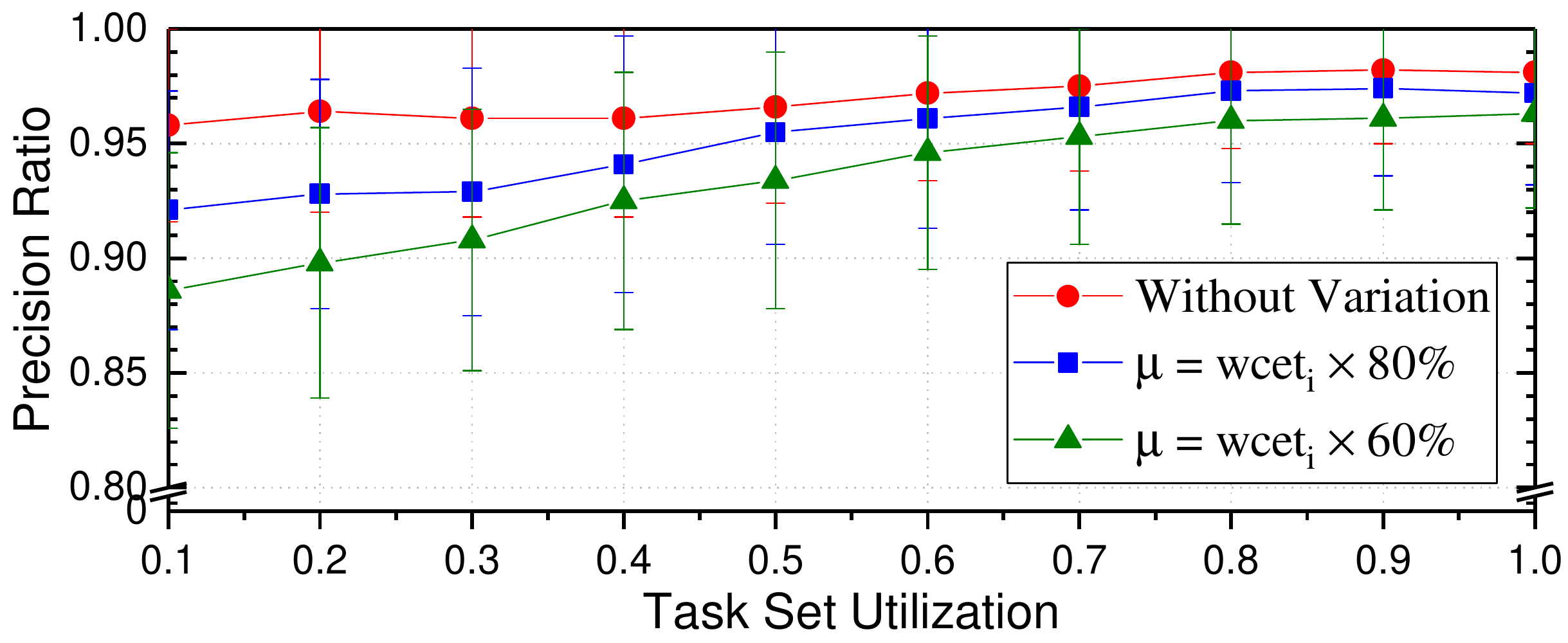}
\vspace{-0.5\baselineskip}
\caption{The comparison of the impact on the inference precision between different degrees of execution time variation. 
The blue line (middle) is the variation configuration used throughout the other experiments in this section. 
}
\label{fig:impact_of_jitters}
\vspace{-0.27in}
\end{figure}


\noindent
\textbf{Impact of Execution Time Variation:}
To understand the extent to which execution time variation impacts the inference precision of the \ATTCKNF algorithm,
we carried out simulations {\em without} execution time variations. We compare such a case with the condition that involves execution time variations produced by the normal distribution where $\mu=wcet_i \cdot 80\%$ and $P(X \leq wcet_i)=99.99\%$ introduced in \cref{subsec::simulation_setup} (this is also the configuration used by other simulation experiments presented in this section).
We also generate another worse case with $\mu=wcet_i \cdot 60\%$ and $P(X \leq wcet_i)=99.99\%$ for comparison.
The results displayed in Figure~\ref{fig:impact_of_jitters},
suggest
that the \ATTCKNF algorithm performs better when there is no execution time variation (the red line). It yields $1.0$ precision ratio in $51.2\%$ of the tested task sets with an overall mean precision ratio of $0.979$.
It's worth nothing that it does not reach $1.0$ precision ratio for all task sets because some may contain harmonic tasks which make it hard to distinguish them when reconstructing.
Since having constant execution times is the best case for \ATTCKNF, we consider this graph an empirical upper bound of the inference precision ratio when a full hyper-period is observed.
Additionally, the figure shows that larger variation leads to lower inference performance.
Nevertheless, it also suggests that the \ATTCKNF algorithm can tolerate wide variations.
It yields an average precision ratio of $0.95$ in $\mu=wcet_i \cdot 80\%$ while it is $0.929$ in $\mu=wcet_i \cdot 60\%$. 

\begin{figure}[t]
\vspace{0.5\baselineskip}
\centering
  \includegraphics[width=0.75\columnwidth]{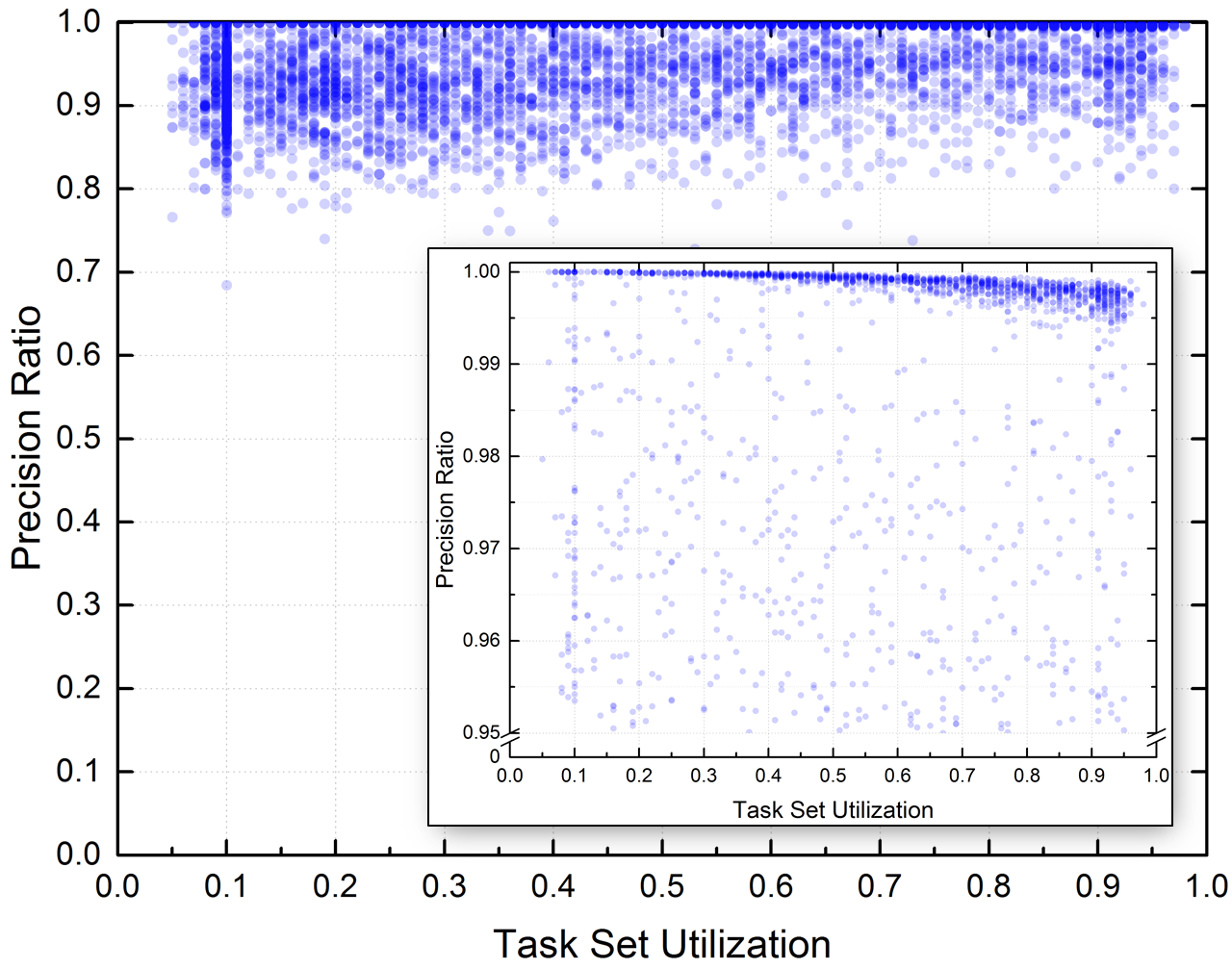}
\vspace{-0.5\baselineskip}
\caption{Detailed data of the 
$\mu=wcet_i \cdot 80\%$ graph in Figure~\ref{fig:impact_of_jitters}. 
X-axis is the utilization of the task sets, and Y-axis is the corresponding inference precision.
A heavier color indicates higher occurrence.}
\label{fig:impact_of_util}
\vspace{-0.9\baselineskip}
\end{figure}

\begin{figure}[t]
\centering
  \includegraphics[width=0.75\columnwidth]{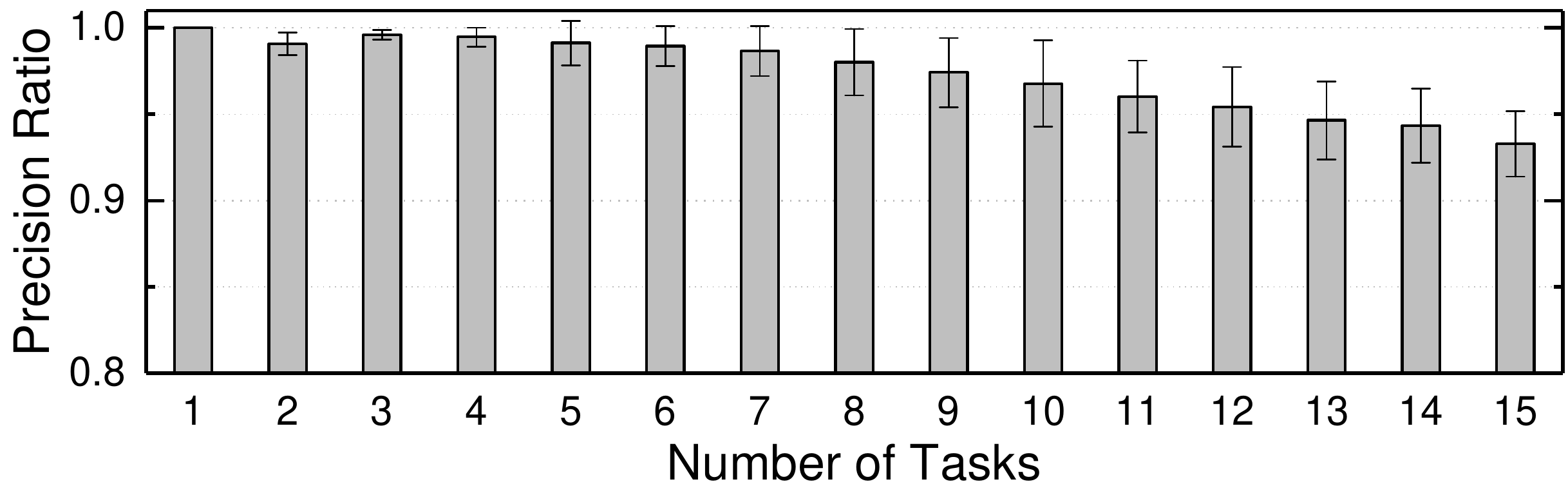}
\vspace{-0.5\baselineskip}
\caption{The inference precision of varied number of tasks. X-axis represents the group of task sets categorized by the number of tasks in a task set. Y-axis is the corresponding inference precision.} 
\label{fig:impact_of_task_num}
\vspace{-0.25in}
\end{figure}

\begin{table}[b]\footnotesize
\vspace{-1\baselineskip}
\centering
\caption{Summary of the inference precision for each utilization group plotted in Figure~\ref{fig:impact_of_util}.}
\label{table:eva_precision_ratio_mean}
\begin{tabular}{|c|c|c|c|c|c|}
\hline
Utilization   & Mean    & SD      & Min    & Median  & Max    \\ \hline
{[}0.0,0.1{]} & 0.9211 & 0.0517 & 0.6846 & 0.9210  & 1      \\ \hline
{[}0.1,0.2{]} & 0.9275 & 0.0498 & 0.7403 & 0.9285  & 1      \\ \hline
{[}0.2,0.3{]} & 0.9290 & 0.0538 & 0.7772 & 0.9289  & 1      \\ \hline
{[}0.3,0.4{]} & 0.9407 & 0.0557 & 0.7499 & 0.9466  & 0.9999 \\ \hline
{[}0.4,0.5{]} & 0.9555 & 0.0495 & 0.8047 & 0.9692  & 0.9999 \\ \hline
{[}0.5,0.6{]} & 0.9606 & 0.0483 & 0.7281 & 0.9832  & 0.9998 \\ \hline
{[}0.6,0.7{]} & 0.9665 & 0.0447 & 0.7576 & 0.9977  & 0.9997 \\ \hline
{[}0.7,0.8{]} & 0.9728 & 0.0405 & 0.7386 & 0.9975  & 0.9996 \\ \hline
{[}0.8,0.9{]} & 0.9737 & 0.0379 & 0.8008 & 0.9968  & 0.9993 \\ \hline
{[}0.9,1.0{]} & 0.9722 & 0.0398 & 0.8005 & 0.9958  & 0.9991 \\ \hline
\end{tabular}
\end{table}

\noindent
\textbf{Impact of Task Set Utilization:}
Next, we analyze how the {\em utilization} of the real-time task sets can affect the precision of our analysis.
The precision ratios for each utilization group are plotted in Figure~\ref{fig:impact_of_jitters} (the blue line).
Raw precision ratios for each task set are provided in Figure~\ref{fig:impact_of_util}, and the statistical data is summarized in Table~\ref{table:eva_precision_ratio_mean}.
From the figures, we observe that the precision ratio decreases as the utilization decreases. 
It is because lower utilizations may indicate \ci potentially shorter execution times and \cii smaller and scattered busy intervals -- both of which are detrimental to our analyses. 
If a task's execution time is smaller than any other task's variation, it can cause the algorithm to falsely filter out such a short task when decomposing a busy interval, resulting in inaccurate inferences. Further, small and scattered busy intervals reduce the probability that an inferred arrival time can successfully land within the busy interval. 
Under such circumstances, \ATTCKNF can't resolve certain inferences.
This false negative is due to the fact that the algorithm only considers the beginning point of an inferred arrival window when reconstructing the schedule of a busy interval (as stated in \cref{approach:reconstruct_schedules}).
This ambiguity influences the inference precision the most when the busy interval is short.

Furthermore, from the zoomed diagram in Figure~\ref{fig:impact_of_util} and Table~\ref{table:eva_precision_ratio_mean} we also observe that the maximum and median precision ratios decrease slightly at high utilizations. It is because higher utilizations lead to larger busy intervals. Consequently more ambiguous conditions show up while trying to estimate the number of arrivals in a busy interval, making it difficult for the \ATTCKNF algorithm to arrive at a correct inference.




\noindent
\textbf{Impact of the Number of Tasks:}
Another factor that may affect the precision ratio is {\em the number of tasks} in a task set.
We compute mean precision ratios for 15 subgroups that have the number of tasks ranging from 1 to 15 and see whether it influences the precision. The result are presented in Figure~\ref{fig:impact_of_task_num}. 
It shows that the more tasks involved in a system, the harder the \ATTCKNF algorithm can correctly infer the schedules. 
It is because having more tasks in a system means potentially more tasks may be involved in a busy interval. It also implies that there can be more possible combinations for the composition of a busy interval, which increases the difficulty of the analysis from \ATTCKNF.

\noindent
\textbf{Impact of Observation Duration:}
In the previous experiments, we fed \ATTCKNF with busy intervals observed from a full hyper-period.
However, since busy intervals are analyzed 
in a discrete fashion
in \ATTCKNF, it is also possible to process the algorithms with busy intervals observed from fewer or more than a full hyper-period.
To evaluate the impact of {\em observation duration}, we let \ATTCKNF observe different proportions of the schedule and collect busy intervals within the observation duration.
The busy intervals that are not complete at the edge are discarded.
The proportion we examine here ranges from $10\%$ to $200\%$ of a hyper-period.
The results are shown in Figure~\ref{fig:hp_vs_rp}.
From the figure, we can see that \ATTCKNF yields a precision ratio of $0.58$ when observing $10\%$ of a hyper-period. 
If we only compute the corresponding precision ratio for the busy intervals within the observation duration, we get a precision ratio of $0.77$ when observing $10\%$ of a hyper-period.
In both cases, the inference precision gets higher as more portions of the schedule are observed.
It's worth mentioning that the inference precision is improved even after collecting over $100\%$ of a hyper-period.
It is due to the nature of normal distributions where the variation around the mean point has high occurrence probability.
That is, as more data is sampled, the variation can be neutralized, resulting in the \ATTCKNF algorithm's ability to infer more accurate arrival windows and start points.

\begin{figure}[t]
  \centering
  \includegraphics[width=0.75\columnwidth]{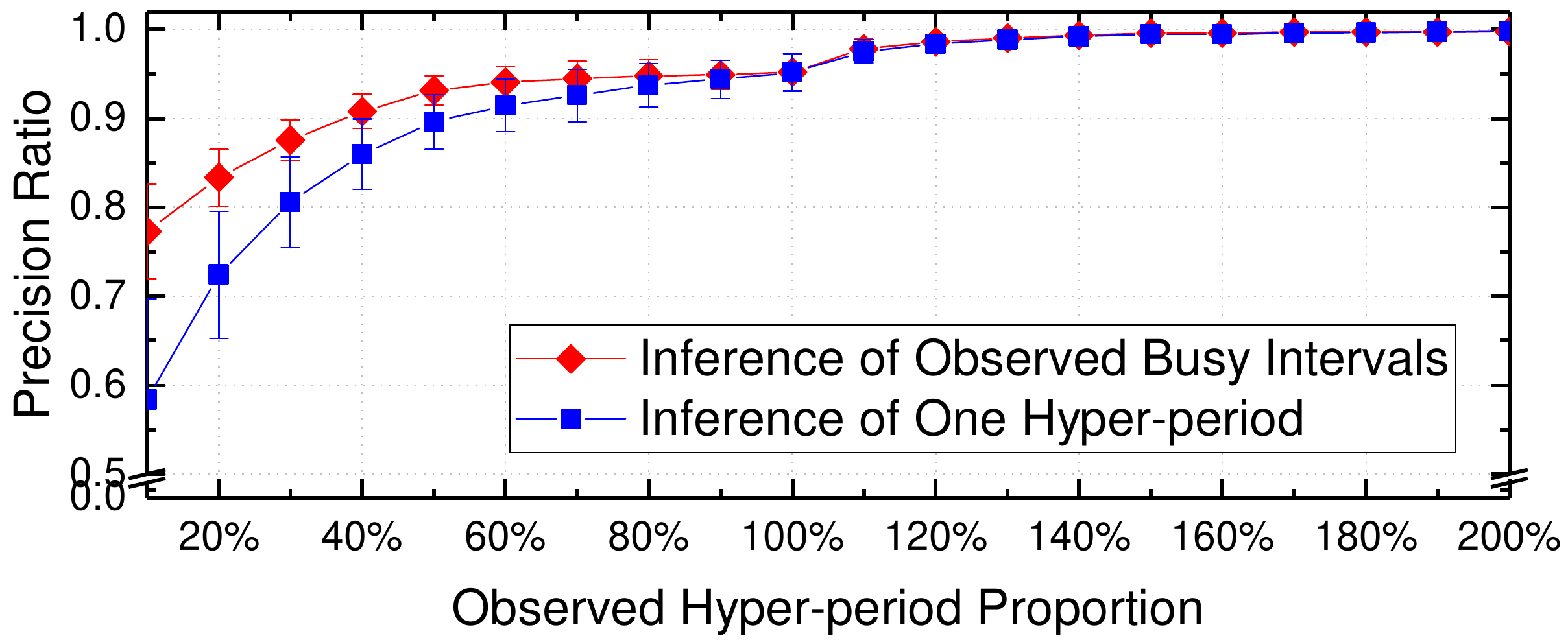}
\vspace{-0.5\baselineskip}
\caption{Impact of observation duration on precision ratio.
The precision ratio gets improved as more schedule data is collected.}
\label{fig:hp_vs_rp}
\vspace{-1\baselineskip}
\end{figure}

\subsection{Zedboard-based Evaluation}
\label{subsec:zedboard_eval}

\subsubsection{Zedboard Setup}
We implemented the \ATTCKNF algorithm on an ARM-based
development board, the {\bf Zedboard}\cite{Zedboard}. 
The processor runs at $666.67MHz$ and drives a shared $64$-bit \emph{Global Timer (GT)} that is clocked at $\approx 333.3MHz$.
We tested our implementation with a variety of different task sets, starting from the UAV model (Figure \ref{fig:sys_example} and \cref{subsec:example_case_study}) to many synthetic task sets. 
The task sets ran on a real-time operating system, \emph{FreeRTOS} -- a lightweight and open source real-time kernel\cite{FreeRTOS}. It is powered by a priority-based preemptive real-time scheduler. 
To obtain ``ground truth'', 
the FreeRTOS kernel has been modified to record time stamps for context switches. 



\subsubsection{Implementation}
\label{implementation}

We introduce how the \OBSERVER captures busy intervals as this step is hardware-dependent. The \emph{Global Timer (GT)} that is accessible to user tasks in \emph{FreeRTOS} is used as a time reference to capture busy intervals. As mentioned above, \emph{GT} is clocked at $333.3MHz$, hence its resolution can be computed as $1/333.3MHz \approx 3ns$. To capture a busy interval, the \OBSERVER uses \emph{GT} as a reference to inspect whether or not it has been preempted. The \OBSERVER reads \emph{GT} in a loop and compares it with the time stored from last read. If nothing preempts the \OBSERVER during the loop, then the difference for times between consecutive loops remains below one unit of the loop execution time (time measurement is presented in \cref{subsec::zedboard_results}). In contrast, if the difference is greater than expected, then a preemption must have occurred. 
%
The \OBSERVER repeats the same process until it collects all busy intervals for at least one hyper-period. After that, the program moves to the busy interval analysis stage. The granularity of the \OBSERVER measurements is further discussed in the next part.

\subsubsection{Zedboard-based Results}
\label{subsec::zedboard_results}

\textbf{Observer Task Overhead:}
We first examine the execution cost of the \OBSERVER and its impact on the ability to capture
busy intervals. Our \OBSERVER uses a loop to read time values from the \emph{Global Timer}. The shortest busy interval that can be measured depends on the cost of reading the timer and executing the \OBSERVER instructions. 
The cost of each read loop measures $447 ns$, on average, in the absence of preemption. 
This is really small compared to the execution times of real-time tasks which are typically more than $10 \mu s$. 
The \emph{Global Timer} counts $447 ns/3 ns \approx 149$ times during each read loop. On the other hand, a small busy interval with just one task that has execution time of $10 \mu s$ will result in the counter incrementing $10 \mu s/3 ns \approx 3333$ times. Hence, it is very unlikely that the \OBSERVER will miss any busy intervals.


The actual cost for capturing a busy interval varies since the timing now involves a pair of {\em context switches}. 
From our experiments a pair of context switches takes $18.03 \mu s$ on average on this board.
This adds to the length of any busy interval but the costs are
bounded as each measurement only includes two read loops. We can confidently remove these costs from every
busy interval. This leaves only {\em jitters} as a source of uncertainty in our measurements. From
our experiments, removing the \OBSERVER costs results in a $0.28 \mu s$ error, on average,
for each busy interval. If the delta due to this error is greater than the execution time for 
the shortest task then it introduces uncertainty into our analysis since we cannot separate it
from a legitimate task. In practice, this depends on the actual application.
For the UAV model presented earlier on, the shortest task (the Network Manager) has an 
execution time of $30 \mu s$ -- two orders of magnitude higher than the error/delta. Hence, the
errors in measuring the busy intervals did not really impact our analyses.

\noindent
\textbf{On-board Analysis Overhead:}
It is important to note that the analysis of busy intervals need not necessarily be performed online. For some attack scenarios, the data can be {\em analyzed offline}. Hence, the analysis will not be limited by the performance of the hardware. In this section
though, we focus on the overheads for carrying out the analysis on the actual board to demonstrate
the feasibility of online analysis.

The estimate of $\NK$ uses Equation~(\ref{eqn:biLength}) to find the matching combinations for a busy interval.
This has order of $2^n$ time complexity. However, the number of tasks is often fixed for a task set, thus $2^n$ is bounded in a given real-time system.
On Zedboard, a $10$-task task set costs $58.94ms$ in the worst case to process all possible combinations for a busy interval.
%
%
Calculating arrival windows
depends on the number of job instances of each task in a hyper-period. 
From our experiments on average, it takes $2.073ms$ to calculate arrival windows for the $10$-task task set mentioned above where each task executes for around $10.5 \mu s$.
%
For the elimination of mismatched $\NK$ values it takes arrival windows of $n$ tasks to inspect each estimate in every busy interval. 
This has a complexity in the order of $n$. 
In the same experiment as above, it takes $17.4us$ to iterate through $14$ busy intervals with $10$ arrival windows from the correspondent $10$ tasks.
Finally, for a task set with $10$ tasks that have $14$ busy intervals on our \emph{Zedboard}, it takes $828.05ms$ to complete the total analysis.
%
In this test case, \ATTCKNF yields a precision ratio of $0.9977$.

\subsubsection{Implications of \ATTCKNF}
\label{sub:eval_cache_attack}
We now demonstrate an attack that takes advantage of the task schedule information obtained using \ATTCKNF by implementing the attack case introduced in \cref{subsec:example_case_study}. 
In particular we use a cache-based side-channel attack \cite{kelsey1998side, osvik2006cache, page2002theoretical} 
(as an instance of storage-channel-based timing attacks) to precisely gauge 
the cache usage of a victim task, in this case the image encoder task.
The cache-based side-channel attack is launched from a periodic task that has the highest priority (as opposed to the observer task that uses the lowest priority).
With the schedule information, we carefully place the cache {\em prime and probe} steps \cite{osvik2006cache} before and after the image encoder task begins and ends. 
We let the attacker task halt the attack if we know that the victim task will not appear in the next period.
Figure~\ref{fig:cache_attack}(a) displays the result of this 
attack.
It shows that the cache-based side-channel attack with schedule information can effectively filter out unnecessary 
information and preserve the cache usage behavior of the victim task.
On the other hand, Figure~\ref{fig:cache_attack}(b) demonstrates the case where the attacker has no knowledge of the task schedule. The attacker then has to  launch the cache-based side-channel attack in every period randomly. As a result, the obtained cache usage contains too much noise from other tasks, making it difficult to obtain useful information.

\begin{figure}[t]
    \centering
        \includegraphics[width=0.75\columnwidth]{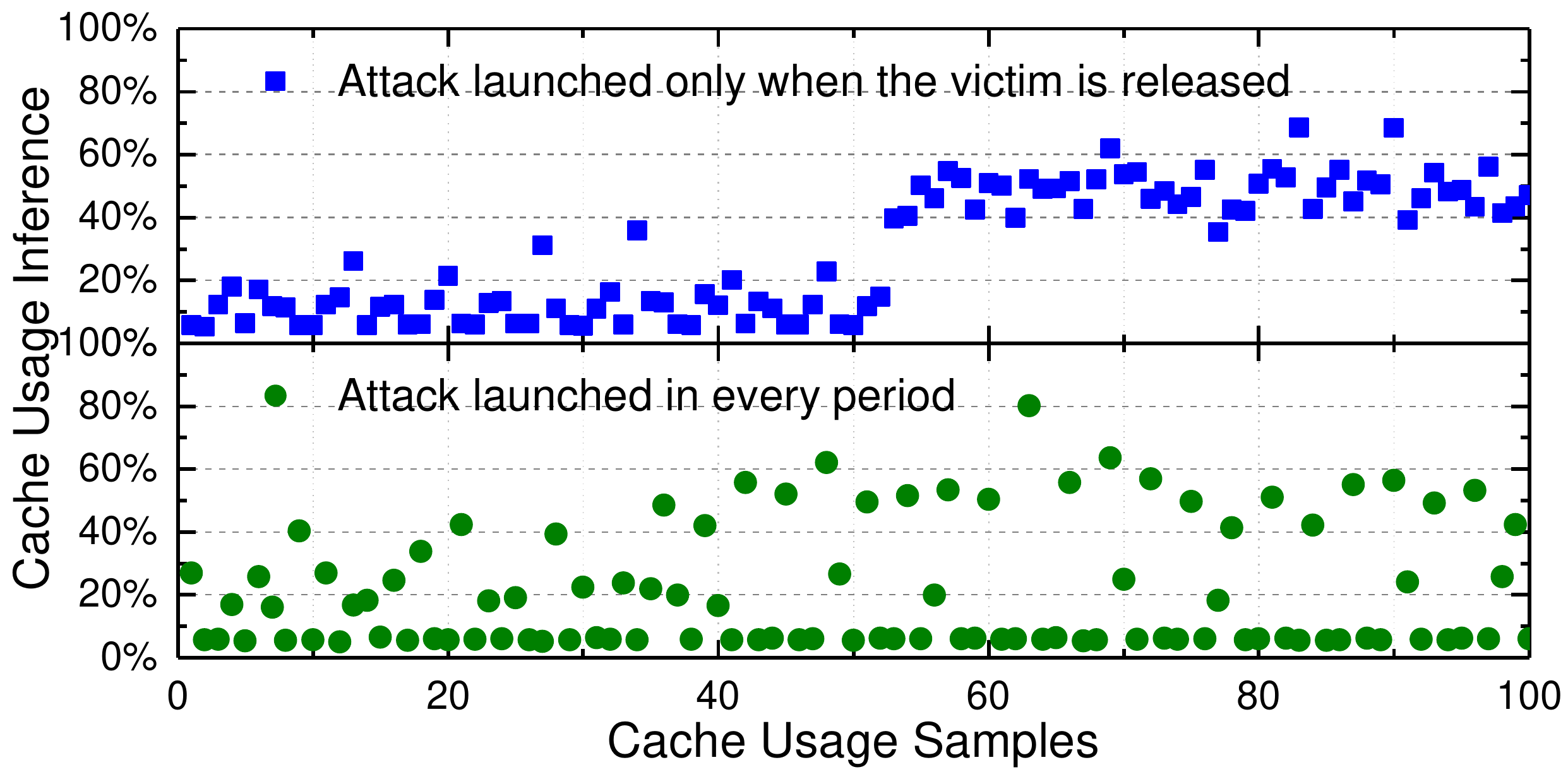}
        \vspace{-0.5\baselineskip}
    \caption{A demonstration of the advantage of a targeted attack. X-axis represents sample points and Y-axis is the estimated cache usage.
Top part shows a successful cache side-channel attack targeting a specific victim task with the knowledge of precise task schedule. 
Bottom part shows that a careless attack leads to unusable information due to too much noise.}
    \label{fig:cache_attack}
      \vspace{-0.2in}
\end{figure}

\section{Discussion}
\label{discussion}

We evaluated the algorithm by varying four variables: \ci execution time variation, \cii CPU utilization, \ciii the number of tasks \civ observed duration. 
From the results, we can conclude that the following factors may influence the inference precision:
\noindent
\textbf{Execution Time Variation:}
Large execution time variation can reduce the precision of the inference. 
As shown in Figure~\ref{fig:impact_of_jitters}, the \ATTCKNF algorithms yield an average precision ratio of $0.95$ and $0.929$ in $\mu=wcet_i \cdot 80\%$ and $\mu=wcet_i \cdot 60\%$ respectively.
\noindent
\textbf{The Number of Tasks:}
Having more tasks in a system means more possible combinations in decomposing a busy interval.
The precision ratio decreases as the number of tasks increases, as shown in Figure~\ref{fig:impact_of_task_num}.
\noindent
\textbf{Busy Interval Length:}
A short busy interval reduces the probability that an inferred arrival time can successfully land within the busy interval.
This mostly happens when the task set utilization is low, which decreases the inference precision.
\noindent
\textbf{Task Set Utilization:}
Higher utilization represents larger busy intervals and more tasks involved in one individual busy interval, leading to increased uncertainty.
This results in a decrease of overall inference precision.
%
\noindent
\textbf{Observed Duration:}
The \ATTCKNF algorithms can work 
with capturing just a portion of the task schedule.
Accumulating more busy interval observation can improve the inference precision.
This is also true when more than one hyper-period is observed.

\section{Related Work}
\label{sec::related}

Recent work on RTS security has demonstrated the feasibility of information leakage via  scheduling. Son \emph{et al.} \cite{embeddedsecurity:son2006} highlighted the susceptibility of RM schedulers to covert timing channel attacks. V\"{o}lp \emph{et al.} \cite{embeddedsecurity:volp2008} presented modifications to fixed-priority scheduling, altering thread blocks that potentially leak information with the idle thread to avoid the exploitation of timing channels. V\"{o}lp \emph{et al.} \cite{embeddedsecurity:volp2013} examined shared-resource covert channels in real-time schedules and addressed it by modifying resource locking protocols.

Kadloor \emph{et al.} \cite{kadloor2013} introduces a methodology for quantifying side-channel leakage for first-come-first-serve and time-division-multiple-access schedulers. Gong and Kiyavash \cite{GongK14} analyzed deterministic work-conserving schedulers.
They discovered a lower bound for the total information leakage. 
The collaborative version of this problem where two users form a covert 
channel in a shared scheduler to steal private information from a secure system is studied by Ghassami \emph{et al.} \cite{ghassami2015capacity}.
While in these works the attacker uses traffic analysis to obtain information about the user activities, our work is primarily concerned with reconstructing the original task schedule as a stepping to stone to other stealthy attacks.



Mohan \emph{et al.} \cite{embeddedsecurity:mohan2014} considered the problem of direct information leakage between real-time tasks through architectural resources such as shared cache. They introduced a modified fixed-priority scheduling algorithm that integrates security levels into scheduling decisions. Pellizzoni \emph{et al.} \cite{embeddedsecurity:mohan2015} extended the above scheme to a more general task model and also proposed an optimal priority assignment method that determines the task preemptibility.


Cache-based side-channels contain the highest bandwidth among all side-channels, making them invaluable for information leakage \cite{
kelsey1998side, osvik2006cache, page2002theoretical}. The growth of cloud computing has caused such attacks face increased scrutiny (\eg \cite{Apecechea_finegrain, ristenpart2009hey}).
Wang and Lee \cite{Wang:2007} presented a hardened cache model to mitigate side-channels by adopting partitioning and memory-to-cache mapping randomization techniques. However, most existing methods aimed at reducing leakage through cache-based side-channels have not considered real-time systems. While other methods of side-channel attacks exist such as power, electromagnetic and frequency analysis \cite{Tiu05anew,Agrawal:2002:ES,Jiang2014}, they are not the focus of this paper. 

\section{Conclusion}
\label{sec::concl}

The methods presented here will improve the design of future
real-time systems. Designers will have an increased awareness of attack
mechanisms that leak crucial information in such systems. Hence, they
can develop methods that prevent or at least reduce the effectiveness of
such attacks. 

We intend to further refine the algorithms 
presented here
and also develop solutions to deter such attacks. We believe that the
area of security for real-time systems will require the development of a
large body of work, along multiple directions, to ensure the overall safety 
of such systems.




\bibliographystyle{abbrvcy}

\end{document}